\author{Alexander S.\ Wein\thanks{Email: \textit{aswein@ucdavis.edu}. Part of this work was done while with the Algorithms and Randomness Center at Georgia Tech, supported by NSF grants CCF-2007443 and CCF-2106444.}}
\affil{Department of Mathematics, University of California, Davis}
\date{}
\title{Average-Case Complexity of Tensor Decomposition\\
for Low-Degree Polynomials}
\begin{document}

\maketitle

\begin{abstract}
Suppose we are given an $n$-dimensional order-3 symmetric tensor $T \in (\RR^n)^{\otimes 3}$ that is the sum of $r$ random rank-1 terms. The problem of recovering the rank-1 components is possible in principle when $r \lesssim n^2$ but polynomial-time algorithms are only known in the regime $r \ll n^{3/2}$. Similar ``statistical-computational gaps'' occur in many high-dimensional inference tasks, and in recent years there has been a flurry of work on explaining the apparent computational hardness in these problems by proving lower bounds against restricted (yet powerful) models of computation such as statistical queries (SQ), sum-of-squares (SoS), and low-degree polynomials (LDP). However, no such prior work exists for tensor decomposition, largely because its hardness does not appear to be explained by a ``planted versus null'' testing problem.

We consider a model for random order-3 tensor decomposition where one component is slightly larger in norm than the rest (to break symmetry), and the components are drawn uniformly from the hypercube. We resolve the computational complexity in the LDP model: $O(\log n)$-degree polynomial functions of the tensor entries can accurately estimate the largest component when $r \ll n^{3/2}$ but fail to do so when $r \gg n^{3/2}$. This provides rigorous evidence suggesting that the best known algorithms for tensor decomposition cannot be improved, at least by known approaches. A natural extension of the result holds for tensors of any fixed order $k \ge 3$, in which case the LDP threshold is $r \sim n^{k/2}$.
\end{abstract}

\newpage

\section{Introduction}

Tensor decomposition is a basic computational primitive underlying many algorithms for a variety of data analysis tasks, including phylogenetic reconstruction~\cite{phylo}, topic modeling~\cite{topic}, community detection~\cite{community,HS-bayesian,AAA,multilayer}, learning mixtures of Gaussians~\cite{mixture-1,mixture-2,smoothed,blessing}, independent component analysis~\cite{ica}, dictionary learning~\cite{BKS,MSS}, and multi-reference alignment~\cite{mra}. For further discussion, we refer the reader to~\cite{moitra-notes,survey-1,survey-2,survey-3}.

In this work we will consider order-$k$ tensors of the form $T \in (\RR^n)^{\otimes k}$ for an integer $k \ge 3$, that is, $T$ is an $n \times \cdots \times n$ ($k$ times) array of real numbers with entries denoted by $T_{i_1,\ldots,i_k}$ with $i_1,\ldots,i_k \in [n] := \{1,2,\ldots,n\}$. A vector $u = (u_i)_{i \in [n]} \in \RR^n$ can be used to form a rank-1 tensor $u^{\otimes k}$ defined by $(u^{\otimes k})_{i_1,\ldots,i_k} = u_{i_1} u_{i_2} \cdots u_{i_k}$, akin to the rank-1 matrix $uu^\top$.

In the \emph{tensor decomposition} problem, we observe a rank-$r$ tensor of the form
\begin{equation}\label{eq:rank-r-tensor}
T = \sum_{j=1}^r a_j^{\otimes k}
\end{equation}
with unknown components $a_j \in \RR^n$. The goal is to recover the components $a_1,\ldots,a_r$ up to the inherent symmetries in the problem, i.e., we cannot recover the ordering of the list $a_1,\ldots,a_r$ and if $k$ is even we cannot distinguish $a_j$ from $-a_j$. Our regime of interest will be $k$ fixed and $n$ large, with $r$ possibly growing with $n$.

A large number of algorithmic results for tensor decomposition now exist under various assumptions on $k,n,r$ and $\{a_j\}$~\cite{LRA,foobi,ica,smoothed,AGJ-1,AGJ-2,BKS,GM,fast-sos,MSS,GM-landscape,SS-fast,HSS-robust,KP-exact,fast-order-3}. We will focus on the order-3 case ($k=3$) to simplify the discussion. If $a_1,\ldots,a_r$ are linearly independent, the decomposition problem can be solved by a classical method based on simultaneous diagonalization\footnote{This method is sometimes called \emph{Jennrich's algorithm}, although this may be a misnomer; see~\cite{kolda-blog}.}~\cite{LRA}. Most recent work focuses on the more difficult \emph{overcomplete} regime where $r > n$. For \emph{random} order-3 tensors, where the components $a_j$ are drawn uniformly from the unit sphere, the state-of-the-art polynomial-time algorithms succeed when $r \ll n^{3/2}$ (where $\ll$ hides a polylogarithmic factor $(\log n)^{O(1)}$); this was achieved first by a method based on the sum-of-squares hierarchy~\cite{MSS} and then later by a faster spectral method~\cite{fast-order-3}. Under the weaker condition $r \lesssim n^2$ (where $\lesssim$ hides a constant factor), the decomposition is unique~\cite{BCO-id} and so the problem is solvable in principle. However, no polynomial-time algorithm is known in the regime $r \gtrsim n^{3/2}$.

For random tensors of any fixed order $k \ge 4$, the situation is similar. The decomposition is unique when $r \lesssim n^{k-1}$~\cite{BCO-id}, but poly-time algorithms are only known for substantially smaller rank. For $k = 4$, poly-time decomposition is known when $r \ll n^2$~\cite{foobi,MSS}, and it is expected that $r \ll n^{k/2}$ is achievable for general $k$; however, to our knowledge, the best poly-time algorithms in the literature for $k \ge 5$ require $r \lesssim n^{\lfloor (k-1)/2 \rfloor}$~\cite{LRA,smoothed}. These algorithms for $k \ge 4$ work not just for random tensors but also under much weaker assumptions on the components.

\paragraph{Statistical-computational gaps.}

The motivation for this work is to understand whether the ``statistical-computational gap'' mentioned above is inherent, that is, we hope to investigate whether or not a poly-time algorithm exists in the apparent ``possible but hard'' regime $n^{k/2} \lesssim r \lesssim n^{k-1}$. For context, gaps between statistical and computational thresholds are a nearly ubiquitous phenomenon throughout high-dimensional statistics, appearing in settings such as planted clique, sparse PCA, community detection, tensor PCA, and more; see e.g.~\cite{WX-survey,phys-survey} for exposition. In these average-case problems where the input is random, classical computational complexity theory has little to say: results on NP-hardness (including those for tensor decomposition~\cite{hastad-rank,tensor-hard}) typically show hardness for worst-case instances, which does not imply hardness in the average case. Still, a number of frameworks have emerged to justify why the ``hard'' regime is actually hard, and to predict the location of the hard regime in new problems.
One approach is based on average-case reductions which formally transfer suspected hardness of one average-case problem (usually planted clique) to another (e.g.~\cite{BR-reduction,HWX-reduction,BBH-reduction,secret-leakage}). Another approach is to prove unconditional lower bounds against particular algorithms or classes of algorithms (e.g.~\cite{jerrum,decelle,MMSE}), and sometimes this analysis is based on intricate geometric properties of the solution space (e.g.~\cite{alg-barriers,GS-ogp,GZ-ogp}; see~\cite{ogp-survey} for a survey). Perhaps some of the most powerful and well-established classes of algorithms to prove lower bounds against are statistical query (SQ) algorithms (e.g.~\cite{sq-clique,DKS-sq}), the sum-of-squares (SoS) hierarchy (e.g.~\cite{sos-clique,KMOW}; see~\cite{sos-survey} for a survey), and low-degree polynomials (LDP) (e.g.~\cite{HS-bayesian,sos-hidden,hopkins-thesis}; see~\cite{ld-notes} for a survey). In recent years, all of the above frameworks have been widely successful at providing many different perspectives on what makes statistical problems easy versus hard, and there has also been progress on understanding formal connections between different frameworks~\cite{sos-hidden,GJW-ld,sq-ld,fp}. For many conjectured statistical-computational gaps, we now have sharp lower bounds in one (or more) of these frameworks, suggesting that the best known algorithms cannot be improved (at least by certain known approaches).

Despite all these successes, the tensor decomposition problem is a rare example where (prior to this work) essentially no progress has been made on any kind of average-case hardness, even though the problem itself and the suspected threshold at $r \sim n^{k/2}$ have maintained a high profile in the community since the work of~\cite{GM} in 2015. The lecture notes~\cite{kunisky-lecture} highlight ``hardness of tensor decomposition'' as one of 15 prominent open problems related to sum-of-squares (Open Problem~7.1). In Section~\ref{sec:diff} we explain why tensor decomposition is more difficult to analyze than the other statistical problems that were understood previously, and in Section~\ref{sec:technique-lower} we explain how to overcome these challenges.

\paragraph{The LDP framework.}

Our main result (Theorem~\ref{thm:main}) determines the computational complexity of tensor decomposition in the \emph{low-degree polynomial (LDP) framework}, establishing an easy-hard phase transition at the threshold $r \sim n^{k/2}$. This means we consider a restricted class of algorithms, namely those that can be expressed as $O(\log n)$-degree multivariate polynomials in the tensor entries. The study of this class of algorithms in the context of high-dimensional statistics first emerged from a line of work on the sum-of-squares (SoS) hierarchy~\cite{sos-clique,HS-bayesian,sos-hidden,hopkins-thesis} and is now a popular tool for predicting and explaining statistical-computational gaps; see~\cite{ld-notes} for a survey. Low-degree polynomials capture various algorithmic paradigms such as spectral methods and approximate message passing (see e.g.\ Section~4 of~\cite{ld-notes}, Appendix~A of~\cite{GJW-ld}, and Theorem~1.4 of~\cite{MW-amp}), and so LDP lower bounds allow us to rule out a large class of known approaches all at once. For the types of problems that typically arise in high-dimensional statistics (informally speaking), the LDP framework has a great track record for consistently matching the performance of the best known poly-time algorithms. As a result, LDP lower bounds can be taken as evidence for inherent hardness of certain types of average-case problems. While there are some settings where LDPs are outperformed by another algorithm, these other algorithms tend to be ``brittle'' algebraic methods with almost no noise tolerance, and so the LDP framework is arguably still saying something meaningful in these settings; see~\cite{HW-counter,morris,KM-tree,lattice-1,lattice-2} for discussion.

Existing LDP lower bounds apply to a wide variety of statistical tasks, which can be classified as hypothesis testing (e.g.~\cite{HS-bayesian,sos-hidden,hopkins-thesis}), estimation (e.g.~\cite{SW-estimation,KM-tree}), and optimization (e.g.~\cite{GJW-ld,W-opt,BH-ksat}). The techniques used to prove these lower bounds are quite different in the three cases, and the current work introduces a new technique for the case of estimation.

\subsection{Largest Component Recovery}

In order to formulate the random tensor decomposition problem in the LDP framework, we will define a variant of the problem where one component has slightly larger norm than the rest, and the goal is to recover this particular component.

\begin{problem}[Largest component recovery]
\label{prob:largest}
Let $k \ge 3$ be odd and $\delta > 0$. In the \emph{largest component recovery problem}, we observe
\[ T = (1+\delta) a_1^{\otimes k} + \sum_{j=2}^r a_j^{\otimes k} \]
where each $a_j$ (for $1 \le j \le r$) is drawn uniformly and independently from the hypercube $\{\pm 1\}^n$. The goal is to recover $a_1$ with high probability.
\end{problem}

\noindent One can think of $\delta$ as a small constant, although our results will apply more generally. The purpose of increasing the norm of the first component is to give the algorithm a concrete goal, namely to recover $a_1$; without this, the algorithm has no way to break symmetry among the components. We have restricted to odd $k$ here because otherwise there is no hope to disambiguate between $a_1$ and $-a_1$; however, we give similar results for even $k$ in Section~\ref{sec:general}. The assumption that $a_j$ is drawn from the hypercube helps to make some of the proofs cleaner, but our approach can likely handle other natural distributions for $a_j$ such as $\mathcal{N}(0,I_n)$ (and we do not expect this to change the threshold).

\paragraph{Connection to decomposition.}

While Problem~\ref{prob:largest} does not have formal implications for the canonical random tensor decomposition model (i.e.,~\eqref{eq:rank-r-tensor} with $a_j$ uniform on the sphere), it does have implications for a mild generalization of this problem.

\begin{problem}[Semirandom tensor decomposition]
\label{prob:semirandom}
Let $k \ge 3$ be odd and $\delta > 0$. In the \emph{semirandom tensor decomposition problem}, we observe
\[ T = \sum_{j=1}^r \lambda_j a_j^{\otimes k} \]
where $\lambda_j \in [1,1+\delta]$ are arbitrary and $a_j$ are drawn uniformly and independently from the hypercube $\{\pm 1\}^n$. The goal is to recover $a_1,\ldots,a_r$ (up to re-ordering) as well as the corresponding $\lambda_j$'s, with high probability.
\end{problem}

\noindent Note that any algorithm for Problem~\ref{prob:semirandom} can be used to solve Problem~\ref{prob:largest} (with the same parameters $k,n,r,\delta$): simply run the algorithm to recover all the $(\lambda_j,a_j)$ pairs and then output the $a_j$ with the largest corresponding $\lambda_j$. When $r \gg n^{k/2}$, our main result (Theorem~\ref{thm:main}) will give rigorous evidence for hardness of Problem~\ref{prob:semirandom} via a two-step argument: we will show LDP-hardness of Problem~\ref{prob:largest}, justifying the conjecture that there is no poly-time algorithm for Problem~\ref{prob:largest}; this conjecture, if true, formally implies that there is no poly-time algorithm for Problem~\ref{prob:semirandom}.

We see Problem~\ref{prob:semirandom} as a natural variant of random tensor decomposition. In particular, we expect that existing algorithms~\cite{MSS,fast-order-3} can be readily adapted to solve this variant when $r \ll n^{k/2}$, at least for $k \in \{3,4\}$. As such, understanding the complexity of Problem~\ref{prob:semirandom} (via the connection to Problem~\ref{prob:largest} described above) can be viewed as the end-goal of this work. From this point onward, we will study Problem~\ref{prob:largest}. In Section~\ref{sec:future} we discuss possible extensions of our result that would more directly address the canonical model for random tensor decomposition.

\begin{remark}
In line with the discussion in the Introduction, our results should really only be considered evidence for hardness of Problems~\ref{prob:largest} and~\ref{prob:semirandom} in the presence of noise. Indeed, if $\delta$ is not an integer, there is a trivial algorithm for Problem~\ref{prob:largest} by examining the non-integer part of $T$ (the author thanks Bruce Hajek for pointing this out). This algorithm is easily defeated by adding Gaussian noise of variance $\sigma^2 \gg 1$ to each entry of $T$. This is a low level of noise, as we believe our low-degree upper bound tolerates $\sigma \ll n^{k/4}$, akin to tensor PCA.
\end{remark}

\subsection{Main Results}

We now show how to cast Problem~\ref{prob:largest} in the LDP framework and state our main results. The class of algorithms we consider are (multivariate) polynomials $f$ of degree (at most) $D$ with real coefficients, whose input variables are the $n^k$ entries of $T \in (\RR^n)^{\otimes k}$ and whose output is a real scalar value; we write $\RR[T]_{\le D}$ for the space of such polynomials. We will be interested in whether such a polynomial can accurately estimate $a_{11} := (a_1)_1$, the first entry of the vector $a_1$. Following~\cite{SW-estimation}, define the \emph{degree-$D$ minimum mean squared error}
\[ \MMSE_{\le D} := \inf_{f \in \RR[T]_{\le D}} \EE[(f(T) - a_{11})^2] \]
where the expectation is over $\{a_i\}$ and $T$ sampled as in Problem~\ref{prob:largest}. Note that the above ``scalar MMSE'' is directly related to the ``vector MMSE'': by linearity of expectation and symmetry among the coordinates,
\begin{equation}\label{eq:vector-mmse}
\inf_{f_1,\ldots,f_n \in \RR[T]_{\le D}} \EE \sum_{i=1}^n (f_i(T) - (a_1)_i)^2 = n \cdot \MMSE_{\le D}.
\end{equation}
It therefore suffices to study the scalar MMSE, with $\MMSE_{\le D} \to 0$ indicating near-perfect estimation and $\MMSE_{\le D} \to 1$ indicating near-trivial estimation (no better than the constant function $f(T) \equiv 0$).

\begin{theorem}\label{thm:main}
Fix any $k \ge 3$ odd and $\delta > 0$. As in Problem~\ref{prob:largest}, let
\[ T = (1+\delta) a_1^{\otimes k} + \sum_{j=2}^r a_j^{\otimes k} \]
where each $a_j$ (for $1 \le j \le r$) is drawn uniformly and independently from the hypercube $\{\pm 1\}^n$. Let $r = r_n$ grow with $n$ as $r = \Theta(n^{\alpha})$ for an arbitrary fixed $\alpha > 0$.
\begin{itemize}
\item (``Easy'') If $\alpha < k/2$ then $\lim_{n \to \infty} \MMSE_{\le C \log n} = 0$ for a constant $C = C(k,\delta) > 0$.
\item (``Hard'') If $\alpha > k/2$ then $\lim_{n \to \infty} \MMSE_{\le n^c} = 1$ for a constant $c = c(k,\alpha) > 0$.
\end{itemize}
\end{theorem}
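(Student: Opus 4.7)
The plan has two halves that call for very different techniques.

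\emph{Easy direction ($\alpha < k/2$).} The plan is to exhibit an explicit polynomial of degree $O(\log n)$ that approximates $a_{11}$ by ``polynomializing'' the known spectral algorithms of \cite{MSS,fast-order-3}. Concretely, from $T$ build a matrix $M = M(T)$ (via a constant-degree arithmetic circuit, e.g.\ a suitable unfolding or lifted contraction) whose top eigenvector is approximately $a_1$ (or $a_1^{\otimes 2}$) with a spectral gap of order $(1+\delta)^2 - 1$. Simple unfolding gives such a gap only up to $r \ll n$, so to reach the threshold $r \ll n^{k/2}$ some form of ``bulk subtraction,'' in the spirit of \cite{MSS,fast-order-3}, will be needed. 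Then I would run $O(\log n)$ rounds of approximate power iteration from a fixed generic initialization: each round adds only $O(1)$ to the polynomial degree in $T$, so the final estimator has degree $O(\log n)$. Extracting the first coordinate and multiplying by a deterministic normalizing constant produces the polynomial $f(T)$, and the MSE bound $\EE[(f-a_{11})^2] \to 0$ follows from standard spectral-perturbation estimates that underpin the algorithms being emulated.

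\emph{Hard direction ($\alpha > k/2$), setup.} Starting from the variational identity
\[ \MMSE_{\le D} \;=\; 1 - \bigl\|\Pi_{V_D}\, a_{11}\bigr\|_{L^2(\mathbb{P})}^2, \qquad V_D := \RR[T]_{\le D}, \]
the goal becomes showing $\|\Pi_{V_D}\, a_{11}\|^2 \to 0$ when $D = n^c$. Since $T$ is a deterministic polynomial function of $\mathbf{a} = (a_j)_{j\in[r]} \in \{\pm 1\}^{nr}$, every element of $V_D$ expands, in the Walsh basis, as a polynomial of degree $\le kD$ in $\mathbf{a}$. Each monomial $T^S = T_{I_1}\cdots T_{I_d}$ opens, upon expanding $T_{I_\ell} = \sum_j \lambda_j \prod_m a_{j,(I_\ell)_m}$, into a sum over ``colored'' configurations $(j_1,\ldots,j_d)$, and $\EE[a_{11}\, T^S]$ collects exactly those colorings whose combined product of $a$'s reduces to $a_{11}$. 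Up to orthogonalization, one should obtain a shape-indexed decomposition of $\|\Pi_{V_D}\, a_{11}\|^2$, where a shape is a $k$-uniform multi-hypergraph whose edges are labelled by components $j\in[r]$ and which carries a distinguished vertex $(1,1)$ forced by the inner product with $a_{11}$.

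\emph{Main obstacle.} The hard part is proving, and then summing, a sharp per-shape bound. A shape on $v$ coordinates, $c$ components, and $d$ hyperedges contributes an amount of rough order $n^v\, r^c\, (1+\delta)^{2d}$ divided by a combinatorial normalization reflecting the shape's automorphisms and the Gram matrix of the $T^S$'s; the non-orthogonality of the natural monomial basis in $V_D$ is a genuine source of difficulty. The combinatorial statement I would aim for is that every admissible shape satisfies $v \ge \tfrac{k}{2}\, c$ (plus lower-order corrections tied to $d$), so the sum over shapes is subcritical once $r \ll n^{k/2}$ while any polynomial-time estimator would need shapes that violate this bound. Unlike standard LDP machinery for hypothesis testing, there is no canonical null distribution here; I would circumvent this by introducing the two conditional laws $\mathbb{P}_\pm := \mathcal{L}(T \mid a_{11} = \pm 1)$ and using the identity $\Pi_{V_D}\, a_{11} = \tfrac12\, \Pi_{V_D}(L_+ - L_-)$, thereby reducing the estimation task to a quantitative comparison of two ``planted'' models amenable to shape analysis. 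Making the Gram-inversion / orthogonalization step effective, and tightening the constants in the balance inequality $v \ge (k/2)c$ so that they match the threshold in the theorem exactly, is where I expect the bulk of the technical work to lie.
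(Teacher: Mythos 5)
Your proposal correctly identifies the two main difficulties but does not resolve either of them, and in both directions the missing step is exactly where the paper's actual content lies. For the hard direction, you set up $\MMSE_{\le D} = 1 - \|\Pi_{V_D} a_{11}\|^2$ and observe that the monomials $T^S$ are not orthogonal, then write that ``making the Gram-inversion / orthogonalization step effective \ldots is where I expect the bulk of the technical work to lie.'' That is the entire problem: the paper never inverts the Gram matrix $P_{S,S'} = \EE[T^S T^{S'}]$. Instead it passes to the Fourier basis $\{A^U\}$ of the underlying Rademacher variables, writes the change of basis as $\hat g = M\hat f$, and constructs an explicit \emph{left-inverse} $M^+$ of $M$, which yields the relaxation $\Corr_{\le D} \le \|c^\top M^+\|$ without ever touching $P^{-1}$. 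The subsequent analysis is a recursively defined quantity $v_S$ whose terms must be paired up so that ``bad paths'' cancel; a naive term-by-term bound does not close. Your proposed alternative route via the conditional laws $\mathbb{P}_\pm = \mathcal{L}(T \mid a_{11} = \pm 1)$ and $\Pi_{V_D} a_{11} = \tfrac12 \Pi_{V_D}(L_+ - L_-)$ does not circumvent the non-orthogonality issue: computing $\Pi_{V_D}$ of anything still requires control of $P^{-1}$, and the paper explicitly notes that likelihood-ratio-style arguments fail here because detection between the rank-$r$ tensor and a comparable null is \emph{easy} for $r \ll n^k$. Also, the target balance inequality $v \ge (k/2)c$ you state would need to be derived from the left-inverse structure, not assumed; it is not a property of ``admissible shapes'' in the raw expansion of $\EE[a_{11} T^S]$, which contains shapes violating it.

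For the easy direction, your plan of emulating power iteration on a flattening of $T$ concedes that simple unfolding only reaches $r \ll n$ and that ``some form of bulk subtraction will be needed'' to get to $n^{k/2}$ --- but that subtraction is the whole construction, and the paper itself remarks that the multi-stage algorithm of the spectral references seems hard to write as a single polynomial. What the paper actually does is different: it forms a single tensor-network contraction of $D-1 = \Theta(\log n)$ copies of $T$ arranged on a $k$-regular \emph{expander} graph, with the nonstandard restriction that edge labels be injective. The signal term then collapses exactly to $a_{11}$ (Lemma~\ref{lem:f1}), and the noise term is controlled by a region/cross-edge counting argument in which Cheeger-type expansion forces every vertex-label region to be crossed by $\Omega(k + c\,|S|)$ edges, producing the $n^{-k/2}$ suppression per region that beats $r \sim n^{k/2}$. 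Without a concrete mechanism of this kind (or a worked-out bulk subtraction), the easy direction of your proposal does not reach the stated threshold.
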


\noindent More precise and more general statements of the lower bound (``hard'' regime) and upper bound (``easy'' regime) can be found in Section~\ref{sec:general} (Theorems~\ref{thm:main-lower} and~\ref{thm:main-upper}), where the case of $k$ even is also handled. The lower bound shows failure of LDP algorithms when $r \gg n^{k/2}$. Since no one has managed to find a poly-time algorithm in the LDP-hard regime for many other problems of a similar nature (including planted clique~\cite{hopkins-thesis}, community detection~\cite{HS-bayesian}, tensor PCA~\cite{sos-hidden,ld-notes}, and more), this suggests that there is no poly-time algorithm for Problem~\ref{prob:largest} (and therefore also for Problem~\ref{prob:semirandom}) when $r \gg n^{k/2}$, at least without a new algorithmic breakthrough. This is the first average-case hardness result of any type for tensor decomposition, and it gives a concrete reason to believe that the existing algorithms for $k=3$ (which succeed when $r \ll n^{3/2}$~\cite{MSS,fast-order-3}) are essentially optimal. Following the heuristic correspondence between polynomial degree and runtime in Hypothesis~2.1.5 of~\cite{hopkins-thesis} (see also~\cite{ld-notes,subexp-sparse})---namely, degree $D$ corresponds to runtime $\exp(\tilde\Theta(D))$ where $\tilde\Theta$ hides factors of $\log n$---the lower bound suggests that runtime $\exp(n^{\Omega(1)})$ is required in the hard regime.

The upper bound shows that a degree-$O(\log n)$ polynomial successfully estimates $a_1$ in the regime $r \ll n^{k/2}$. Such a polynomial has $n^{O(\log n)}$ terms and can thus be evaluated in time $n^{O(\log n)}$ (quasi-polynomial). This is in stark contrast to the $\exp(n^{\Omega(1)})$ runtime that we expect to need in the hard regime. We are confident that for $k \in \{3,4\}$ and $r \ll n^{k/2}$, Problems~\ref{prob:largest} and~\ref{prob:semirandom} can in fact be solved in \emph{polynomial} time by adapting existing algorithms~\cite{MSS,fast-order-3}, but we have not attempted this here as our main focus is on establishing the phase transition for polynomials. We consider the lower bound to be our main contribution, and the upper bound serves to make the lower bound more meaningful.

\subsection{Discussion}

\subsubsection{Why the LDP framework?}

Since there are many different frameworks for exploring statistical-computational gaps, we briefly discuss the possibility of applying some of the others to tensor decomposition, and explain why the LDP framework seems especially suited to this task.

\paragraph{Statistical query (SQ) model.}

The SQ model (e.g.~\cite{sq-clique,DKS-sq}) is applicable only to settings where the observation consists of i.i.d.\ samples from some unknown distribution. There does not seem to be a clear way to cast tensor decomposition in this form.

\paragraph{Sum-of-squares (SoS) hierarchy.}

SoS is a powerful algorithmic tool, and SoS lower bounds (e.g.~\cite{sos-clique,KMOW,sos-survey}) are sometimes viewed as a gold standard in average-case hardness. However, it is important to note that SoS lower bounds show hardness of \emph{certification} problems, whereas Problems~\ref{prob:largest} and~\ref{prob:semirandom} are \emph{recovery} problems. Hardness of certification does not necessarily imply hardness of recovery, as discussed in~\cite{sk-cert,local-stats,quiet-coloring}. Therefore, while there are some natural certification problems related to tensor decomposition that would be nice to have SoS lower bounds for (e.g.\ Open Problem~7.2 in~\cite{kunisky-lecture}), it is not clear how to use an SoS lower bound to directly address the decomposition problem.

\paragraph{Optimization landscape.}

Prior work~\cite{AGJ-1,GM-landscape} has studied the landscape of a certain non-convex optimization problem related to tensor decomposition, leading to some algorithmic results. In other settings, recent work has studied structural properties of optimization problems as a means to prove failure of certain algorithms (e.g., local search, Markov chains, and gradient descent~\cite{GZ-ogp,BGJ-tensor,submatrix-ogp,ogp-sparse}) to recover a planted structure. If results of this type were obtained for tensor decomposition, they would complement ours by ruling out different types of algorithms. One caveat is that recent work on tensor PCA~\cite{RM-tensor-pca,BGJ-tensor,passed,kikuchi,iron-rough} and planted clique~\cite{GZ-clique,clique-elude} has revealed that the choice of which function to optimize can be very important. Thus, ``hardness'' of one particular canonical optimization landscape need not indicate inherent hardness of the recovery problem.

\paragraph{Reductions.}

Reductions from planted clique (e.g.~\cite{BR-reduction,HWX-reduction,BBH-reduction,secret-leakage}) are among the strongest forms of average-case hardness that exist for statistical problems, and it would be great to have reduction-based evidence for hardness of tensor decomposition. This would likely require new ideas beyond those in the current literature, as Problem~\ref{prob:largest} is somewhat different in structure from planted clique and the problems it has been reduced to thus far. These differences are explained in Section~\ref{sec:diff} below.

\bigskip

In summary, there are many complementary approaches that could be explored in future work in order to give new perspectives on hardness of tensor decomposition. In this work we have chosen to pursue the LDP framework, some strengths of which include its stellar track record of predicting statistical-computational gaps in the past, and its flexibility to directly address the decomposition problem (Problem~\ref{prob:semirandom}) rather than a related certification or optimization problem.

\subsubsection{Difficulties of tensor decomposition}
\label{sec:diff}

Although the suspected threshold $r \sim n^{k/2}$ for random tensor decomposition has been around for some time now~\cite{GM,MSS}, and despite the many recent successes in computational lower bounds for other statistical problems, there are (prior to this work) effectively no results that point to whether $r \sim n^{k/2}$ is a fundamental barrier or not. Here we will discuss what makes tensor decomposition more difficult to analyze than the other statistical-computational gaps that have received recent attention in the literature.

As a point of comparison, we briefly introduce the \emph{tensor principal component analysis (tensor PCA)} problem~\cite{RM-tensor-pca}: here we observe an order-$k$ tensor $Y = \lambda x^{\otimes k} + Z$ where $\lambda > 0$ is a signal-to-noise ratio, $x \in \RR^n$ is drawn uniformly from the unit sphere, and $Z \in (\RR^n)^{\otimes k}$ is a tensor with i.i.d.\ $\mathcal{N}(0,1)$ entries; the goal is to (approximately) recover the vector $x$ (up to sign, if $k$ is even). Tensor PCA also has a statistical-computational gap, and it is now well-understood: the best known poly-time algorithms require $\lambda \gtrsim n^{k/4}$, and there are matching lower bounds in both the SoS and LDP frameworks~\cite{tensor-pca-sos,sos-hidden,PR-sos-cert,ld-notes}. While tensor decomposition and tensor PCA may look superficially similar, tensor decomposition is much harder to analyze for the reason described below.

With a few exceptions (\cite{SW-estimation,KM-coloring,KM-tree}), essentially all existing lower bounds for recovering a planted signal in the SQ, SoS, or LDP frameworks leverage hardness of testing between a ``planted'' distribution and a ``null'' distribution, where the null distribution has independent entries. In the case of tensor PCA, in order to show hardness of recovering $x$ when $\lambda \ll n^{k/4}$, the SoS and LDP lower bounds crucially leverage the fact that it is hard to even \emph{detect} the planted structure when $\lambda \ll n^{k/4}$, that is, it is hard to distinguish between a spiked tensor $Y$ and an i.i.d.\ Gaussian tensor $Z$. Now for the decomposition problem, we are hoping to show hardness of decomposition whenever $r \gg n^{k/2}$, but the problem of distinguishing between a random rank-$r$ tensor and the appropriate Gaussian tensor\footnote{The Gaussian tensor matches moments of order $1$ and $2$ with the rank-$r$ tensor. (One could choose a different null distribution, but the analysis may be difficult; for instance, see Section~\ref{sec:future}.)} is actually easy when $r \ll n^k$; this can be achieved by thresholding a particular degree-4 polynomial in the tensor entries, where each monomial forms a double cover of $2k$ elements of $[n]$. This ``detection-recovery gap'' creates difficulty because the standard tools for proving lower bounds cannot show hardness of recovery unless detection is also hard. The tools with this limitation include the \emph{SDA (statistical dimension on average)} approach for SQ~\cite{sq-clique}, the \emph{pseudo-calibration} approach for SoS~\cite{sos-clique}, and the \emph{low-degree likelihood ratio}~\cite{HS-bayesian,sos-hidden,hopkins-thesis,ld-notes} (as well as its conditional variants~\cite{fp,grp-testing}) for LDP. Reductions from planted clique also typically require hardness of detection~\cite{BBH-reduction}.

The method of~\cite{SW-estimation} overcomes this challenge in some settings and gives LDP lower bounds for recovery problems even in regimes where detection is easy. This methods applies to problems of the form (roughly speaking) ``signal plus noise'' where the noise has independent entries. For instance, this gives LDP-hardness of recovery for variants of tensor PCA with detection-recovery gaps~\cite{LZ-tensor}. However, tensor decomposition does not take this ``signal plus noise'' form: Problem~\ref{prob:largest} has a ``signal'' term $(1+\delta)a_1^{\otimes k}$ but the remaining ``noise'' term is a rank-$(r-1)$ tensor, which does not have independent entries. As a result, we will need to introduce a new method in order to prove a lower bound for tensor decomposition. The proof strategy is outlined in Section~\ref{sec:technique-lower} below.

\subsection{Proof Techniques}

\subsubsection{Lower bound}
\label{sec:technique-lower}

Instead of mean squared error it will be convenient to work with an equivalent quantity, the \emph{degree-$D$ maximum correlation}
\[ \Corr_{\le D} := \sup_{f \in \RR[T]_{\le D}} \frac{\EE[f(T) \cdot a_{11}]}{\sqrt{\EE[f(T)^2]}}. \]
This is directly related to $\MMSE_{\le D}$ via the identity $\MMSE_{\le D} = \EE[a_{11}^2] - \Corr_{\le D}^2 = 1 - \Corr_{\le D}^2$ (see~\cite[Fact~1.1]{SW-estimation}), so our objective is to show $\Corr_{\le D} = o(1)$ when $r \gg n^{k/2}$ and $D = n^{\Omega(1)}$.

A first attempt is to compute $\Corr_{\le D}$ explicitly. Expand an arbitrary degree-$D$ polynomial in the monomial basis: $f(T) = \sum_{0 \le |S| \le D} \hat f_S T^S$ where $S$ ranges over multi-sets of tensor entries and $T^S := \prod_{I \in S} T_I$. The numerator of $\Corr_{\le D}$ is a linear functional of the coefficient vector $\hat f = (\hat f_S)_{|S| \le D}$:
\[ \EE[f(T) \cdot a_{11}] = \sum_S \hat f_S \, \EE[T^S \cdot a_{11}] = c^\top \hat f \qquad \text{where } c_S := \EE[T^S \cdot a_{11}]. \]
For the denominator, $\EE[f(T)^2]$ is a quadratic form:
\[ \EE[f(T)^2] = \sum_{S,S'} \hat f_S \hat f_{S'} \, \EE[T^S T^{S'}] = \hat f^\top P \hat f \qquad \text{where } P_{S,S'} := \EE[T^S T^{S'}]. \]
This means we have the explicit formula
\[ \Corr_{\le D} = \sup_{\hat f} \frac{c^\top \hat f}{\sqrt{\hat f^\top P \hat f}} = \sqrt{c^\top P^{-1} c}. \]
The difficulty here is that, while we can write down an explicit formula for the vector $c$ and matrix $P$, it does not seem tractable to write down an explicit formula for the \emph{inverse} matrix $P^{-1}$. We will instead find an upper bound on $\Corr_{\le D}$ that is manageable to work with.

Our difficulties stem from the fact that we do not have an orthogonal basis of polynomials to work with. After all, if $\{T^S\}_{|S| \le D}$ were orthogonal polynomials---i.e., if $P$ were a diagonal matrix---then we would have no problem inverting $P$. Since the distribution of $T$ is complicated (in particular, it is not a product measure), it seems difficult to construct an explicit basis of orthogonal polynomials. Instead, we will think of $f(T)$ as a function of the underlying i.i.d.\ Rademacher random variables $A = (a_{ij}) := (a_j)_i$ and work with an orthogonal basis of polynomials in those variables. For any $f(T)$ there is an associated polynomial $g(A)$ such that $f(T) = g(A)$, and we can expand these as
\[ \sum_{|S| \le D} \hat f_S T^S = f(T) = g(A) = \sum_{|U| \le kD} \hat g_U A^U, \]
where $U \subseteq [n] \times [r]$, $A^U := \prod_{(i,j) \in U} a_{ij}$, and $\hat g = (\hat g_U)_{|U| \le kD}$ is some vector of coefficients. Here $\{A^U\}$ is the standard Fourier basis for functions on the hypercube, which crucially does have the desired orthogonality property: $\EE[A^U A^{U'}] = \One_{U=U'}$. As a result, we can express the denominator of $\Corr_{\le D}$ as simply
\[ \EE[f(T)^2] = \EE[g(A)^2] = \|\hat g\|^2. \]
To exploit this, we will need to understand the relation between $\hat f$ and the corresponding $\hat g$. We will write down an explicit linear transformation, i.e., a matrix $M$ such that $\hat g = M \hat f$. The crux of our new strategy is that to obtain an upper bound on $\Corr_{\le D}$ it suffices to construct a left-inverse for $M$, i.e., a matrix $M^+$ such that $M^+ M = I$. To see why this suffices,
\[ \Corr_{\le D} = \sup_{\hat f} \frac{c^\top \hat f}{\|M \hat f\|} = \sup_{\hat f} \frac{c^\top M^+ M \hat f}{\|M \hat f\|} \le \sup_h \frac{c^\top M^+ h}{\|h\|} = \|c^\top M^+\|, \]
where in the inequality step, $h$ plays the role of $\hat g = M \hat f$ except we have relaxed the problem to allow $h$ to be \emph{any} vector, not necessarily in the image of $M$.

In light of the above, our remaining task is to construct an explicit left-inverse $M^+$. There are many possible choices, some of which will yield better bounds on $\Corr_{\le D}$ than others. We will need one that yields a good bound but is also analytically tractable to write down explicitly. We now explain the intuition behind our construction for $M^+$. Note that the left-inverse property is equivalent to $M^+ \hat g = \hat f$ whenever $\hat g = M \hat f$. That is, $M^+$ is a procedure to recover the (coefficients of the) polynomial $f$ when given the (coefficients of the) polynomial $g$ that satisfies $f(T) = g(A)$. Luckily there is a methodical process for this task which starts from the highest-degree terms and works downward. At each stage, there is always a particular monomial in $A$ whose coefficient in $g$ allows us to immediately deduce some particular coefficient in $f$. To illustrate, if $k = 3$ and $D = 2$ then the coefficient of the monomial
\begin{equation}\label{eq:example-monomial}
a_{27}a_{37}a_{47}a_{28}a_{38}a_{58}
\end{equation}
in $g(A)$ allows us to immediately deduce the coefficient of the monomial $T_{234} T_{235}$ in $f(T)$, since this is the only monomial in $T$ (of degree at most $2$) whose expansion in terms of $A$ contains the term~\eqref{eq:example-monomial}. Once we have deduced the highest-degree coefficients of $f$, we can subtract off the corresponding terms in $g$ and repeat the process recursively. Our choice of $M^+$ is inspired by this process; the full details can be found in Section~\ref{sec:construct-left}.

Recall now that the final step is to bound $\|c^\top M^+\|$ for our choice of $M^+$. Due to the recursive structure of $M^+$, this boils down to analyzing certain recursively-defined values $v_S$ (see Section~\ref{sec:rec-v}). As above, $S$ is a multi-set of tensor entries, which can be thought of as a hypergraph on vertex set $[n]$ (with a hyperedge for each tensor entry). The value $v_S$ is computed by summing over all possible ways to reduce $S$ to a smaller hypergraph by replacing some collection of hyperedges by a new hyperedge equal to their symmetric difference. Analyzing this recurrence is the most technically involved part of the proof, since there are subtle cancellations that need to be understood in order to get the right bound.

\subsubsection{Upper bound}

To construct a polynomial that accurately estimates the largest component, we take inspiration from a line of algorithmic work that builds spectral methods from tensor networks~\cite{fast-sos,tensor-net,fast-order-3,tensor-theory}. A tensor network is a diagram that specifies how to ``multiply'' together a collection of tensors to form another tensor; see~\cite{tensor-net} for details. Some existing algorithms for order-3 tensor decomposition~\cite{fast-sos,fast-order-3} (including the fastest known method that reaches the threshold $r \sim n^{3/2}$~\cite{fast-order-3}) are based on the idea of building a tensor network from multiple copies of the input tensor, flattening the result to a matrix, and running a spectral method on this matrix. Morally speaking, these are all steps that one should be able to capture using low-degree polynomials, but the algorithm of~\cite{fast-order-3} is actually a multi-stage process that seems challenging to write as a single polynomial. Also, since our goal is to estimate the largest component, we do not need to inject randomness to break symmetry among the components as in~\cite{fast-order-3}.

To construct our polynomial, we multiply $O(\log n)$ copies of $T$ together in a tensor network whose shape is an expander graph. The output is a single vector, which we take as our estimate for $a_1$. There is one key trick we use to greatly simplify the analysis: we deviate slightly from the standard notion of a tensor network by disallowing repeated ``edge labels.'' The precise definition of the polynomial appears in Section~\ref{sec:construct-poly}.

\subsection{Future Directions}
\label{sec:future}

We collect here a list of possible extensions of our main result and related open problems.

\begin{itemize}
    
    \item {\bf Distribution of $a_j$:} We have taken $a_j$ drawn uniformly from the hypercube in the interest of keeping the proofs as clean as possible. We expect our approach could be adapted to other distributions for $a_j$ such as $\mathcal{N}(0,I_n)$.
    
    \item {\bf Canonical model:} While we feel that the semirandom tensor decomposition model (Problem~\ref{prob:semirandom}) is quite natural, there may be interest in showing hardness for the more canonical model where every $\lambda_j$ is equal to 1. One way to approach this could be to study the task of hypothesis testing between a random rank-$r$ tensor and a random rank-$(r+1)$ tensor. Hardness (for poly-time algorithms) of this testing problem implies hardness of decomposition in the canonical model. It may be possible to use our proof technique to show LDP-hardness of this testing problem. Another alternative approach could be to consider a variant of Problem~\ref{prob:largest} where symmetry between the tensor components is broken by giving the algorithm a small amount of side-information rather than increasing the norm of one component.
    
    \item {\bf Tensors with algebraic structure:} Hopefully the proof techniques introduced here will open the door to other hardness results for related problems. For instance, \emph{orbit recovery problems}---a class of statistical problems involving group actions~\cite{orbit,orbit-2}---give rise to variants of tensor decomposition with algebraic structure. One example is \emph{multi-reference alignment (MRA)}, where the tensor components are cyclic shifts of one another~\cite{mra}. A statistical-computational gap in \emph{heterogeneous MRA} was conjectured in~\cite{het-mra}; the positive side was resolved in~\cite{wein-thesis} using~\cite{MSS}, and the negative side should now be approachable using our techniques. Tensors with continuous group symmetry are even less well understood~\cite{tensor-net,LM-so3}.
    
    \item {\bf Smoothed order-3 tensor decomposition:} For $k=3$, the algorithms achieving the optimal condition $r \ll n^{3/2}$ seem to crucially exploit randomness in the components~\cite{MSS,fast-order-3}. In contrast, other algorithms succeed in the \emph{smoothed analysis model} (which imposes minimal assumptions on the components) but require~$r \lesssim n$~\cite{ica,smoothed}. It remains unclear whether better algorithms for the smoothed model exist, or whether there is an inherent gap between the random and smoothed settings. For $k=4$ there is no such gap, with both random and smoothed algorithms achieving $r \sim n^2$~\cite{foobi,MSS}.
    
    \item {\bf Other LDP estimation lower bounds:} Proving LDP lower bounds for estimation problems remains a difficult task in many settings, with relatively few tools available compared to detection (hypothesis testing) problems. A few open problems are to resolve the LDP recovery threshold for sparse linear regression and group testing; these problems have detection-recovery gaps and the LDP \emph{detection} thresholds were resolved in~\cite{fp,grp-testing}.
    
\end{itemize}

\section{General Setting and Results}\label{sec:general}

It will be convenient to generalize Problem~\ref{prob:largest} in a few different ways. First, we allow all the components to have potentially different norms. This way, there is nothing distinguished about the first component, eliminating some tedious casework. To this end, we will consider tensors of the form
\begin{equation}\label{eq:obs-tensor}
\sum_{j=1}^r \lambda_j a_j^{\otimes k}
\end{equation}
where each component $a_j$ is drawn uniformly and independently from the hypercube $\{\pm 1\}^n$, and $\lambda_j \in \RR$ are deterministic nonzero scalars that are known to the algorithm. Without loss of generality we assume $1 = \lambda_1 \ge |\lambda_2| \ge |\lambda_3| \ge \cdots \ge |\lambda_r| =: \lambda_{\min} > 0$. The goal is to recover the largest component $a_1$. The case $\lambda_2 = \lambda_3 = \cdots = \lambda_r = (1+\delta)^{-1}$ is equivalent to Problem~\ref{prob:largest}.

Second, we will potentially allow the algorithm access not just to an order-$k$ tensor but also to tensors of lower order. For $I \subseteq [n]$ write
\begin{equation}\label{eq:T_I}
T_I := \sum_{j=1}^r \lambda_j a_j^I
\end{equation}
with $\lambda_j$ and $a_j$ as above, and where for a vector $v \in \RR^n$,
$v^I := \prod_{i \in I} v_i$. Note that since the $a_j$ have $\{\pm 1\}$-valued entries, each entry of the tensor~\eqref{eq:obs-tensor} takes the form $T_I$ for some $0 \le |I| \le k$.

Let $\Omega$ be a collection of subsets of $[n]$. We consider the task of estimating $a_{11} := (a_1)_1$ using a degree-$D$ polynomial $f: \RR^\Omega \to \RR$ whose input variables are $\{T_I\}_{I \in \Omega}$. Accordingly, we define
\[ \MMSE_{\le D}^\Omega := \inf_{\substack{f: \RR^\Omega \to \RR \\ \deg(f) \le D}} \EE[(f(T) - a_{11})^2]. \]

To make our \emph{lower bound} for order-$k$ tensors as strong as possible, we will choose $\Omega = \{I \subseteq [n] \,:\, 0 < |I| \le k\}$. This is equivalent to giving the algorithm access to all the tensors
\[ \sum_{j=1}^r \lambda_j a_j, \qquad \sum_{j=1}^r \lambda_j a_j^{\otimes 2}, \qquad \ldots, \qquad \sum_{j=1}^r \lambda_j a_j^{\otimes k}, \]
(which is a situation that often arises when using the method of moments, e.g.~\cite{mixture-1,mra}). Note that a lower bound in this setting implies a lower bound in the original setting where only the order-$k$ tensor is revealed.

To make our \emph{upper bound} as strong as possible, we will give the algorithm access to minimal information. For $k$ odd, we will take $\Omega = \{I \subseteq [n] \,:\, |I| = k\}$, that is, our polynomial only needs access to the ``off-diagonal'' entries of the order-$k$ tensor ($T_{i_1,\ldots,i_k}$ where $i_1,\ldots,i_k$ are all distinct). For $k$ even, the order-$k$ tensor alone does not suffice to disambiguate between $a_1$ and $-a_1$, so we additionally reveal the (off-diagonal part of the) order-$(k-1)$ tensor, that is, we take $\Omega = \{I \subseteq [n] \,:\, k-1 \le |I| \le k\}$. (An alternative formulation for $k$ even could be to reveal only the order-$k$ tensor and ask the algorithm to estimate $(a_1)_1 \cdot (a_1)_2$; we expect this could be analyzed using our methods.)

The following are non-asymptotic statements of our main results. Together, these immediately imply Theorem~\ref{thm:main}.

\begin{theorem}[Lower bound]\label{thm:main-lower}
Consider the setting described above with any parameters $k \ge 3$, $n \ge 1$, $D \ge 0$, and set $\Omega = \{I \subseteq [n] \,:\, 0 < |I| \le k\}$. If
\begin{equation}\label{eq:r-assum-hard}
r \ge 19 k^k D^{k+4} \lambda_{\min}^{-2} n^{k/2}
\end{equation}
then
\[ \MMSE_{\le D}^\Omega \ge 1 - n^{-1/2}. \]
In particular, if $k \ge 3$ is fixed, $\lambda_{\min} \ge \delta$ for fixed $\delta > 0$, and $r = r_n$ grows as $r = \Theta(n^\alpha)$ for fixed $\alpha > k/2$, then $\lim_{n \to \infty} \MMSE_{\le n^c}^\Omega = 1$ for a constant $c = c(k,\alpha) > 0$.
\end{theorem}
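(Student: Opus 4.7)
The plan is to follow the roadmap of Section~\ref{sec:technique-lower}. I first pass from mean squared error to the degree-$D$ maximum correlation
\[ \Corr_{\le D}^\Omega := \sup_{\substack{f : \RR^\Omega \to \RR \\ \deg f \le D}} \frac{\EE[f(T) \cdot a_{11}]}{\sqrt{\EE[f(T)^2]}}, \]
using $\MMSE_{\le D}^\Omega = 1 - (\Corr_{\le D}^\Omega)^2$; the theorem then reduces to showing $\Corr_{\le D}^\Omega \le n^{-1/4}$ under~\eqref{eq:r-assum-hard}. Expanding $f(T) = \sum_{|S| \le D} \hat f_S T^S$ over multisets $S$ drawn from $\Omega$, the numerator becomes $c^\top \hat f$ with $c_S := \EE[T^S a_{11}]$, while the denominator $\EE[f(T)^2] = \hat f^\top P \hat f$ involves a matrix $P$ that is not tractable to invert directly.

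To avoid inverting $P$, I reparametrize: every $f(T)$ equals some polynomial $g(A)$ of degree at most $kD$ in the underlying Rademacher matrix $A = (a_{ij})$, and orthonormality of the Fourier characters on the hypercube gives $\EE[f(T)^2] = \|\hat g\|^2$. Expanding each $T^S$ in the $A^U$ basis yields an explicit matrix $M$ with $\hat g = M \hat f$. For any left inverse $M^+$ (satisfying $M^+ M = I$) we then obtain
\[ \Corr_{\le D}^\Omega \;=\; \sup_{\hat f} \frac{c^\top M^+ M \hat f}{\|M \hat f\|} \;\le\; \sup_h \frac{c^\top M^+ h}{\|h\|} \;=\; \|c^\top M^+\|, \]
by relaxing the optimization over $\hat f$ to an unconstrained one over $h$.

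I would construct $M^+$ by recursive peeling. For each multiset $S$ with $|S| \le D$ there is a canonical witness monomial $A^U$, obtained by placing the hyperedges of $S$ in \emph{disjoint} columns of $A$ (as in the example $a_{27}a_{37}a_{47}a_{28}a_{38}a_{58}$ for $S = \{T_{234},T_{235}\}$); among all degree-$\le D$ monomials in $T$, only $T^S$ generates this particular $A^U$, so $\hat g_U$ determines $\hat f_S$ up to a $\lambda$-dependent scalar. Processing $S$ from highest degree downward and subtracting off the contributions deduced so far gives an explicit closed form for each row of $M^+$. Substituting this into $c^\top M^+$ yields, coordinatewise, a recursively defined quantity $v_S$ of the type described in Section~\ref{sec:technique-lower}: a signed sum over all contractions of sub-hyperedges of $S$ (viewed as a hypergraph on $[n]$) by symmetric differences, weighted by inverse $\lambda$'s. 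The distinguished vertex $1$ (entering through $a_{11}$) must appear with odd total multiplicity in $S$, and it is this parity constraint that ultimately produces the $n^{-1/2}$ gain relative to the $\Theta(\sqrt{r})$-scale of a typical tensor entry.

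The main obstacle is the sharp combinatorial analysis of $\|c^\top M^+\|^2$ through the recurrence for $v_S$. A naive term-by-term bound leaks factors of the form $D^{O(D)}$ coming from the number of contraction sequences and from overcounting identical final hypergraphs; to achieve the polynomial-in-$D$ dependence $D^{k+4}$ appearing in~\eqref{eq:r-assum-hard} one must exploit cancellations between contractions that collapse to the same hypergraph (these arise naturally from the signs produced by the recursive peeling), combined with an entropy-style count organized by the excess degree of the hypergraph over its vertex set. Once $\|c^\top M^+\|^2 \le n^{-1/2}$ is secured under~\eqref{eq:r-assum-hard}, the asymptotic statement is immediate: choosing $c = c(k,\alpha) > 0$ small enough ensures that $D = n^c$ still satisfies $19 k^k D^{k+4} \lambda_{\min}^{-2} n^{k/2} \le r = \Theta(n^\alpha)$ when $\alpha > k/2$.
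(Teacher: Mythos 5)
Your outline reproduces the paper's strategy almost step for step: the passage from $\MMSE$ to $\Corr$, the reparametrization $f(T)=g(A)$ in the Rademacher variables, the bound $\Corr_{\le D} \le \|c^\top M^+\|$ via an explicit left inverse built by peeling off the highest-degree coefficients using the ``disjoint-columns'' witness monomials (the paper's generic $U$ with $\cols(U)=S$), the resulting recursion for $v_S$ over symmetric-difference contractions, the parity constraint at the distinguished index $1$ producing the $n^{-1/2}$, and the final geometric-series summation. All of that is the right plan and matches the paper.

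The genuine gap is in the one sentence where you describe what the cancellations are for and where they come from. The paper does \emph{not} need cancellations to avoid $D^{O(D)}$ overcounting: its bound on the recursion is $|v_S| \le (3|S|^2)^{|S|}$, which is exactly $D^{O(D)}$, and this is harmless because it enters the final sum as $(\mathrm{poly}(k,d)\, n^{k/2}/r)^d$ and is absorbed by the falling factorial $r^{\underline{|S|}}$ in the denominator of \eqref{eq:corr-v} once $r \ge 19 k^k D^{k+4}\lambda_{\min}^{-2} n^{k/2}$. The real obstruction a naive term-by-term bound hits is different: in the expanded recursion \eqref{eq:v-rec-expanded}, a \emph{deletion} contraction (a set of hyperedges XOR-ing to $\emptyset$ in some column) on a column that is never touched again can be placed in $\Theta(r)$ ways, so naively each such event costs a factor of $r$ in the numerator, which is fatal since $r \gg n^{k/2}$. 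The paper kills exactly these terms by a sign-reversing involution on \emph{paths} of the recursion tree (promoting/merging the last deletion event on a column), pairing each ``bad'' path with one of opposite parity and identical weight; only ``good'' paths survive, and for those every deletion lands on one of at most $D$ already-active columns. Your proposal gestures at ``cancellations between contractions that collapse to the same hypergraph,'' which is not the right pairing and is motivated by the wrong bottleneck; without identifying which terms must cancel and exhibiting the weight-preserving involution, the bound $\|c^\top M^+\|^2 \le n^{-1/2}$ does not follow, and this is the technical heart of the proof.
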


\begin{theorem}[Upper bound]\label{thm:main-upper}
Consider the setting described above with any $k \ge 3$. Let $k' \in \{k-1,k\}$ be odd, and set $\Omega = \{I \subseteq [n] \,:\, k' \le |I| \le k\}$. Suppose $n \ge n_0$ for some constant $n_0 = n_0(k)$. Let $D$ be the smallest odd integer such that
\[ D \ge \frac{k \log n}{1-|\lambda_2|}. \]
If $|\lambda_2| \le 1 - n^{-1/52}$ and
\begin{equation}\label{eq:r-assum-easy}
r \le \frac{1}{2} k^{-k/2} D^{-27k} n^{k/2}
\end{equation}
then
\begin{equation}\label{eq:mmse-upper-bound}
\MMSE_{\le D}^\Omega \le 10 k^{k-1} D^{52k} \frac{r}{n^{k-1-\One_{k\text{ even}}}} \le 10 k^{k-1} D^{52k} \frac{r}{n^{k/2}}.
\end{equation}
In particular, if $k \ge 3$ is fixed, $|\lambda_2| \le 1-\delta$ for fixed $\delta > 0$, and $r = r_n$ grows as $r = \Theta(n^\alpha)$ for fixed $0 < \alpha < k/2$, then $\lim_{n \to \infty} \MMSE_{\le C \log n}^\Omega = 0$ for a constant $C = C(k,\delta) > 0$.
\end{theorem}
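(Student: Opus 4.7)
The plan is to build the estimator from a tensor network, as sketched in the paper's upper-bound proof-technique discussion. Let $G$ be a roughly $k$-regular expander graph on $D$ vertices (with $D$ the smallest odd integer at least $k \log n/(1 - |\lambda_2|)$), where each vertex has $k$ incident half-edges and one half-edge is left dangling as the ``output.'' Associate a copy of the tensor $T$ to each vertex, identifying its legs with the incident edges; when $k$ is even, the tensor at one specially marked vertex is the order-$(k-1)$ tensor, which breaks the $\pm 1$ sign ambiguity and accounts for the extra factor of $n$ in~\eqref{eq:mmse-upper-bound}. Define the estimator by
\[
f_i(T) \;=\; \sum_{\substack{\phi: E(G) \to [n] \\ \phi \text{ injective},\; \phi(e_{\mathrm{out}}) = i}} \; \prod_{v \in V(G)} T_{\phi(\partial v)},
\]
where $\partial v$ denotes the edges incident to $v$. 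The \emph{injectivity} of $\phi$, i.e.\ ``disallowing repeated edge labels,'' is the trick highlighted in the paper: it guarantees that every monomial of $f_i$ in the underlying Rademacher variables involves distinct coordinates, so the hypercube Fourier analysis downstream stays clean. The degree of $f_i$ in $\{T_I\}_{I \in \Omega}$ is exactly $D$.

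Expanding $T_I = \sum_j \lambda_j a_j^I$ at each vertex gives
\[
f_i(T) = \sum_{\mathbf j \in [r]^{V(G)}} \Bigl( \prod_v \lambda_{j_v} \Bigr) \sum_{\phi \text{ as above}} \prod_v a_{j_v}^{\phi(\partial v)}.
\]
I would partition this sum into the \emph{signal} term ($\mathbf j \equiv 1$), the \emph{uniform noise} terms ($\mathbf j \equiv j$ for some $j \ne 1$), and the \emph{mixed} terms (at least two distinct values of $j_v$). For the signal, injectivity of $\phi$ together with $(a_1)_\ell^2 = 1$ collapses the product to a deterministic count of injective labelings times $(a_1)_i$, yielding $\EE[f_i(T) \cdot (a_1)_i] \asymp n^{|E(G)|}$. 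The uniform noise is bounded by $|\lambda_j|^D \le |\lambda_2|^D \le n^{-k}$, so the prefactor of $r$ coming from the sum over $j \ne 1$ is wiped out by the choice of $D$.

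The bulk of the work lies in bounding $\EE[f_i(T)^2]$, where the mixed terms dominate. By orthogonality of the Fourier basis on $\{\pm 1\}^{n \times r}$, the expectation $\EE \prod_v a_{j_v}^{\phi(\partial v)} \prod_w a_{j'_w}^{\phi'(\partial w)}$ vanishes unless, for each color $j$, the multi-set of $(i,j)$ pairs used by the vertices colored $j$ decomposes into pairs. The plan is to show via edge expansion of $G$ that any non-signal pairing of the doubled network contains at least $k$ cross-edges (edges with endpoints of different colors) per extra color class, and that each cross-edge contributes at most $\sqrt n$ rather than $n$ to the contraction. Combined with $r$ choices of color per class, this costs a factor of order $D^{O(k)} \cdot r/n^{k/2}$ per additional class; summing the resulting geometric series under~\eqref{eq:r-assum-easy} yields $\EE[f_i(T)^2] \le n^{2|E(G)|}\bigl(1 + O(D^{O(k)} r/n^{k/2})\bigr)$, and the identity $\MMSE = 1 - \Corr^2$ converts this signal-to-noise ratio into~\eqref{eq:mmse-upper-bound}.

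\textbf{Main obstacle.} The crux is sharp bookkeeping of the mixed cross terms: a naive union bound over color partitions of $V(G)$ is far too lossy, and producing the optimal exponent $n^{k/2}$ while keeping the combinatorial prefactor polynomial in $D$ requires a charging argument built on quantitative edge expansion of $G$ that converts each merged color class into enough extra cross-edges. Making this work uniformly in $k$ probably requires a random $k$-regular graph combined with a union bound over the polynomially many cross-edge patterns that can arise, together with an induction on the number of color classes.
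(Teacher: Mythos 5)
Your proposal follows essentially the same route as the paper: an expander-shaped tensor network of $O(\log n)$ copies of $T$ with injective edge-labels and a distinguished output edge, the same three-way split into the all-ones signal term, the constant-color noise terms (killed by $|\lambda_2|^D \le n^{-k}$), and the mixed terms, and the same second-moment analysis of the doubled network in which Fourier orthogonality forces cross-edges to pair up across the two copies and edge expansion supplies at least $k$ cross-edges per extra color class, each costing $n^{-1/2}$. The only (immaterial) differences are presentational: the paper normalizes the scalar estimator so the signal term equals $a_{11}$ exactly and bounds $\EE[(f-a_{11})^2]$ directly rather than passing through the correlation identity.
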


\noindent Note that~\eqref{eq:r-assum-easy} is the bottleneck, requiring $r \ll n^{k/2}$. The intermediate step in~\eqref{eq:mmse-upper-bound} is a sharper bound on the MMSE that we use for the remark below, but it does not dictate the threshold for $r$.

\begin{remark}[Exact recovery]
We remark that if $\MMSE_{\le D} = o(1/n)$ then \emph{exact} recovery of $a_1$ is possible with probability $1-o(1)$ by thresholding the values of certain degree-$D$ polynomials $f_1,\ldots,f_n$. To see this: combining~\eqref{eq:vector-mmse} with Markov's inequality, we have $f_1,\ldots,f_n$ such that $\sum_{i=1}^n (f_i(T) - (a_1)_i)^2 < 1$ with probability $1 - o(1)$. Furthermore, thresholding $f_1,\ldots,f_n$ is guaranteed to exactly recover $a_1$ whenever $\sum_{i=1}^n (f_i(T) - (a_1)_i)^2 < 1$.

As a result, when $|\lambda_2| \le 1 - \Omega(1)$ and $r \le n^{k/2} / (\log n)^{O(1)}$, our upper bound (Theorem~\ref{thm:main-upper}) gives exact recovery for any fixed $k \ge 5$. We expect a similar result to hold for $k \in \{3,4\}$ but this may require modifying the construction of the polynomial in the proof of Theorem~\ref{thm:main-upper}.
\end{remark}

\section{Proof of Theorem~\ref{thm:main-lower}: Lower Bound}

\subsection{Setup}

Throughout, define $\Omega := \{I \subseteq [n] \,:\, 0 < |I| \le k\}$. Our goal will be to give an upper bound on
\[ \Corr_{\le D} := \sup_{\substack{f: \RR^\Omega \to \RR \\ \deg(f) \le D}} \frac{\EE[f(T) \cdot a_{11}]}{\sqrt{\EE[f(T)^2]}}. \]
This will imply the desired result due to the following direct relation between $\Corr_{\le D}$ and the associated MMSE.

\begin{fact}[{\cite[Fact~1.1]{SW-estimation}}]
\label{fact:corr-mmse}
$\MMSE_{\le D}^\Omega = \EE[a_{11}^2] - \Corr_{\le D}^2 = 1 - \Corr_{\le D}^2$.
\end{fact}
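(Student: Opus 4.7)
The plan is to prove this via a standard Hilbert-space projection argument, viewing random variables as elements of $L^2$ under the inner product $\langle X, Y \rangle := \EE[XY]$ and norm $\|X\| := \sqrt{\EE[X^2]}$. I would let $V_D \subset L^2$ denote the finite-dimensional linear subspace consisting of evaluations $f(T)$ as $f$ ranges over polynomials $f : \RR^\Omega \to \RR$ of degree at most $D$, and aim to show that both $\MMSE_{\le D}^\Omega$ and $\EE[a_{11}^2] - \Corr_{\le D}^2$ equal $\|a_{11}\|^2 - \|\Pi_{V_D} a_{11}\|^2$, where $\Pi_{V_D}$ denotes $L^2$-orthogonal projection onto $V_D$.

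First I would handle the MMSE side: by the variational characterization of $L^2$ projection, the infimum $\inf_{f \in V_D} \|f - a_{11}\|^2$ is attained at $f^\star := \Pi_{V_D} a_{11}$, and the Pythagorean identity gives $\MMSE_{\le D}^\Omega = \|a_{11}\|^2 - \|f^\star\|^2$. Next, for the correlation side, I would use that the residual $a_{11} - f^\star$ is orthogonal to $V_D$, so $\langle f, a_{11} \rangle = \langle f, f^\star \rangle$ for every $f \in V_D$; Cauchy--Schwarz then yields $\langle f, a_{11} \rangle / \|f\| \le \|f^\star\|$, with equality at $f = f^\star$ whenever $f^\star \ne 0$. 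This shows $\Corr_{\le D} = \|f^\star\|$, and combining with the MMSE identity gives the first equality in the statement. The second equality $\EE[a_{11}^2] = 1$ is then immediate since $a_{11} \in \{\pm 1\}$ under the hypercube distribution.

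There is no substantive obstacle; the only mild subtlety is the degenerate case $f^\star = 0$, where the correlation supremum must be interpreted with the convention that it equals $0$ rather than the indeterminate form $0/0$. I would handle this either by restricting the supremum to nonzero $f$ (so that $\Corr_{\le D} = 0$ matches $\MMSE_{\le D}^\Omega = 1$ trivially), or equivalently by a direct expansion argument avoiding projections altogether: write $\EE[(tf - a_{11})^2] = t^2 \EE[f^2] - 2t \EE[f \cdot a_{11}] + 1$ for $f \in V_D$ and $t \in \RR$, optimize over $t$ using that $V_D$ is closed under scalar multiplication, and then take infimum of the resulting quantity $1 - \EE[f \cdot a_{11}]^2 / \EE[f^2]$ over nonzero $f \in V_D$ to recognize it as $1 - \Corr_{\le D}^2$.
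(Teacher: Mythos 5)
Your argument is correct and is precisely the standard projection/completing-the-square proof of this identity; the paper itself does not reprove it but simply cites \cite[Fact~1.1]{SW-estimation}, whose proof is the same Hilbert-space argument you give. Your handling of the degenerate case $f^\star = 0$ and the observation that $\EE[a_{11}^2]=1$ for hypercube components are both fine, so there is nothing to add.
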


Any degree-$D$ polynomial $f: \RR^\Omega \to \RR$ admits an expansion of the form
\[ f(T) = \sum_{0 \le |S| \le D} \hat f_S T^S \]
for some coefficients $\hat f_S \in \RR$, where $S$ takes values $S \in \NN^\Omega := \{0,1,2,\ldots\}^\Omega$ with $|S| := \sum_{I \in \Omega} S_I$ and $T^S := \prod_{I \in \Omega} T_I^{S_I}$.

At the same time, $T$ can be thought of as a function of $A$, the $n \times r$ matrix with columns $a_j$. This means we can also expand $f$ as
\[ f(T) = g(A) = \sum_U \hat g_U A^U \]
where $U$ ranges over subsets $U \subseteq [n] \times [r]$ of cardinality $|U| \le kD$. This expansion will be useful because, since $A$ has i.i.d.\ Rademacher entries, $\{A^U\}$ forms an orthonormal basis in the sense $\EE[A^U A^{U'}] = \One_{U = U'}$. As a consequence,
\begin{equation}\label{eq:g-norm}
\EE[f(T)^2] = \EE[g(A)^2] = \|\hat g\|^2 := \sum_U \hat g_U^2.
\end{equation}
Any vector of coefficients $\hat f = (\hat f_S)_{|S| \le D}$ induces a unique choice of $\hat g = (\hat g_U)_{|U| \le kD}$ such that
\begin{equation}\label{eq:f-g}
\sum_{|S| \le D} \hat f_S T^S = f(T) = g(A) = \sum_{|U| \le kD} \hat g_U A^U.
\end{equation}
It will be important to understand this mapping from $\hat f$ to $\hat g$.

\subsection{Proof Overview}

We will show that the mapping from $\hat f$ to $\hat g$ in~\eqref{eq:f-g} is a linear transformation that takes the form $\hat g = M \hat f$ for an explicit matrix $M = (M_{US})$. A key step in the proof will be to construct an explicit \emph{left inverse} for $M$, that is, a matrix $M^+$ satisfying $M^+ M = I$. In other words, $M^+$ is a matrix that recovers the coefficients $\hat f$ from the coefficients $\hat g$: $M^+ \hat g = M^+ M \hat f = \hat f$.

The numerator of $\Corr_{\le D}$ can be expressed as
\[ \EE[f \cdot a_{11}] = \sum_{|S| \le D} \hat f_S \, \EE[T^S \cdot a_{11}] = c^\top \hat f
\]
where the vector $c = (c_S)_{|S| \le D}$ is defined by
\begin{equation}\label{eq:c-def}
c_S = \EE[T^S \cdot a_{11}].
\end{equation}
Using~\eqref{eq:g-norm}, the denominator can be expressed as $\sqrt{\EE[f(T)^2]} = \|\hat g\| = \|M \hat f\|$. This means we can write
\begin{equation}\label{eq:corr-bound}
\Corr_{\le D} = \sup_{\hat f} \frac{c^\top \hat f}{\|M \hat f\|} = \sup_{\hat f} \frac{c^\top M^+ M \hat f}{\|M \hat f\|} \le \sup_h \frac{c^\top M^+ h}{\|h\|} = \|c^\top M^+\|.
\end{equation}
In the crucial \emph{inequality} above, $h$ plays the role of $\hat g = M \hat f$ except we have relaxed the problem to allow $h$ to be \emph{any} vector, not necessarily in the image of $M$. After this simplification, the optimizer for $h$ is $h^* = (c^\top M^+)^\top$ (or any scalar multiple thereof), yielding the explicit expression $\|c^\top M^+\|$ for the optimum. So long as we can construct a left inverse $M^+$, this gives an upper bound on $\Corr_{\le D}$. While many choices for $M^+$ are possible, we will need to find one that (i) is simple enough to work with explicitly, and (ii) results in a good bound on $\Corr_{\le D}$ at the end of the day.

\subsection{Computing $M$}

The first step in writing down an explicit expression for $M$ will be to express $T^S$ in the basis $\{A^U\}$.

\begin{definition}[List notation for $S$]
\label{def:list-S}
We will identify $S \in \NN^\Omega$ with a multi-set, namely the multi-set containing $S_I$ copies of each $I \in \Omega$. With some abuse of notation, write $S$ as an ordered list $S = (I_1,\ldots,I_{|S|})$ containing the elements of the associated multi-set sorted according to some arbitrary but fixed ordering on $\Omega$. For a \emph{labeling} $\ell = (\ell_1,\ldots,\ell_{|S|}) \in [r]^{|S|}$, define $S(\ell) \subseteq [n] \times [r]$ to be the subset containing all $(i,j)$ pairs with the following property: the element $i$ occurs in an odd number of the sets $\{I_d \,:\, \ell_d = j\}$. (Informally, $S(\ell)$ is produced by placing each $I_d$ into column $\ell_d$ of an $n \times r$ grid and then XOR'ing the contents of each column.)
\end{definition}

With this notation we can write
\begin{equation}\label{eq:T-exp}
T^S
= \prod_{I \in \Omega} \left(\sum_{j=1}^r \lambda_j a_j^I \right)^{S_I}
= \prod_{d=1}^{|S|} \left(\sum_{\ell_d=1}^r \lambda_{\ell_d} a_{\ell_d}^{I_d} \right)
= \sum_{\ell \in [r]^{|S|}} \lambda^\ell A^{S(\ell)}
\end{equation}
where $\lambda^\ell := \prod_d \lambda_{\ell_d}$.

Next we consider an arbitrary vector of coefficients $\hat f$ and write down an expression for the corresponding $\hat g$ in~\eqref{eq:f-g}. We have
\begin{align*}
    f(T) &= \sum_{|S| \le D} \hat f_S T^S \\
    &= \sum_{|S| \le D} \hat f_S \sum_{\ell \in [r]^{|S|}} \lambda^\ell A^{S(\ell)} \\
    &= \sum_{|U| \le kD} A^U \underbrace{\sum_{|S| \le D} \hat f_S \sum_{\ell \in [r]^{|S|}} \lambda^\ell\, \One_{S(\ell) = U}}_{\hat g_U}.
\end{align*}
In other words, $\hat g = M \hat f$ where
\begin{equation}\label{eq:M-def}
M_{US} := \sum_{\ell \in [r]^{|S|}} \lambda^\ell\, \One_{S(\ell) = U}.
\end{equation}

\subsection{Constructing the Left Inverse}
\label{sec:construct-left}

We now construct a left inverse $M^+$ for $M$, which, recall, is needed to apply the bound~\eqref{eq:corr-bound}. The intuition behind this construction is described in Section~\ref{sec:technique-lower}.

\begin{definition}[Generic $U$]
We call $U \subseteq [n] \times [r]$ \emph{generic} provided that every column $U_j \subseteq [n]$ satisfies $|U_j| \le k$, and at most $D$ columns satisfy $|U_j| > 0$.
\end{definition}

\noindent These are the ``generic'' terms appearing in the expansion~\eqref{eq:T-exp}: $U$ is generic if and only if there exist $S \in \NN^\Omega$ with $|S| \le D$ and $\ell \in [r]^{|S|}$ with \emph{distinct} entries $\ell_1,\ldots,\ell_{|S|}$ such that $S(\ell) = U$. (Note that~\eqref{eq:r-assum-hard} implies $D \le r$, so it is possible for $\ell$ to have distinct entries.) Furthermore, if $U$ is generic then there is a \emph{unique} corresponding $S$, namely $\cols(U)$ defined as follows.

\begin{definition}
For a generic $U$, define $\cols(U) \in \NN^\Omega$ by letting $\cols(U)_I$ be the number of columns $j$ for which $U_j = I$.
\end{definition}

\noindent When viewing $S$ as a multi-set as per Definition~\ref{def:list-S}, $\cols(U)$ is simply the multi-set of columns $U_j$. Recalling the definition $|S| := \sum_{I \in \Omega} S_I$, note that $|\cols(U)|$ denotes the number of non-empty columns of $U$.

Our left inverse $M^+$ will satisfy $M^+_{SU} = 0$ whenever $U$ is not generic; in other words, our procedure for recovering $\hat f$ from $\hat g$ only uses the values $\hat g_U$ for which $U$ is generic. Write $M$ in block form
\[ M = \left[\begin{array}{c} G \\ N \end{array}\right] \]
where $G$ (``generic'') is indexed by generic $U$'s and $N$ (``not'') is indexed by the rest. It suffices to construct a left inverse $G^+$ for $G$ and then set
\begin{equation}\label{eq:M-inv-gen}
M^+ = \left[\begin{array}{cc} G^+ & 0 \end{array}\right].
\end{equation}
Note that $G = G(D)$ has the recursive structure
\[ G(D) = \left[\begin{array}{cc} G(D-1) & R(D) \\ 0 & Q(D) \end{array}\right] \]
where the first block of columns is indexed by $|S| \le D-1$ and the second is indexed by $|S| = D$, and the first block of rows is indexed by $|\cols(U)| \le D-1$ and the second by $|\cols(U)| = D$. Crucially, the lower-left block is 0: recall~\eqref{eq:M-def} and note that $S(\ell) = U$ is only possible when $|S| \ge |\cols(U)|$.
Given any left inverses $G(D-1)^+, Q(D)^+$ for $G(D-1),Q(D)$ respectively, one can verify that the following matrix is a valid left inverse for $G(D)$:
\begin{equation}\label{eq:block-inv}
G(D)^+ := \left[\begin{array}{cc} G(D-1)^+ & -G(D-1)^+ R(D) Q(D)^+ \\ 0 & Q(D)^+ \end{array}\right].
\end{equation}
The left inverse $G(D-1)^+$ can be constructed by applying~\eqref{eq:block-inv} recursively. The matrix $Q(D)$ has only one nonzero entry per row and so we will be able to construct $Q(D)^+$ by hand.

\begin{lemma}
The following is a valid left inverse for $Q = Q(D)$: for $U$ generic and $|S| = |\cols(U)| = D$, define $Q^+ = (Q^+_{SU})$ by
\begin{equation}\label{eq:Q-inv}
Q^+_{SU} := \frac{\One_{\cols(U) = S}}{\lambda^U r^{\underline{|S|}}}
\end{equation}
where
\[ \lambda^U := \prod_{j \,:\, |U_j| > 0} \lambda_j \]
and
\begin{equation}\label{eq:ff}
r^{\underline{d}} := \underbrace{r(r-1)(r-2) \cdots (r-d+1)}_{d \text{ factors}}.
\end{equation}
\end{lemma}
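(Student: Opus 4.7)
The strategy is to compute $Q_{US}$ explicitly from the definition~\eqref{eq:M-def} for the restricted index set ($|S|=|\cols(U)|=D$, $U$ generic), and then check $Q^+ Q = I$ by direct summation. Throughout, for $S \in \NN^\Omega$ let $m_I := S_I$ denote the multiplicity of $I \in \Omega$ in $S$ (so $\sum_I m_I = |S|$).

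\textbf{Step 1: the only labelings that contribute are bijections.} Fix $U$ generic with $|\cols(U)| = D$ and $S$ with $|S| = D$, written as a list $(I_1,\ldots,I_D)$. A labeling $\ell \in [r]^D$ with $S(\ell) = U$ must satisfy $|\cols(S(\ell))| = D$, but $|\cols(S(\ell))|$ is at most the number of distinct entries of $\ell$. Hence the only labelings that can contribute have all $D$ entries distinct. When the entries are distinct, each column $j = \ell_d$ receives exactly one set $I_d$ and no XOR occurs, so $S(\ell) = U$ forces $\ell$ to be a bijection from $\{1,\ldots,D\}$ onto the set of non-empty columns of $U$ with $U_{\ell_d} = I_d$ for every $d$. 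Such a bijection exists iff $\cols(U) = S$ as multisets, in which case the number of bijections is $\prod_I m_I!$ (freely permute positions within each equivalence class $\{d : I_d = I\}$). Each such $\ell$ contributes the same weight $\lambda^\ell = \prod_{j : |U_j|>0} \lambda_j = \lambda^U$. Therefore
\[ Q_{US} = \lambda^U \cdot \Bigl(\prod_I m_I!\Bigr) \cdot \One_{\cols(U) = S}. \]

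\textbf{Step 2: count generic $U$'s with $\cols(U) = S$.} Such a $U$ is obtained by picking an ordered $D$-tuple of distinct columns $(j_1,\ldots,j_D) \in [r]^D$ and assigning $I_d$ to column $j_d$, but the resulting $U$ is invariant under permutations within each class $\{d : I_d = I\}$. Hence
\[ \bigl|\{U \text{ generic} : \cols(U) = S\}\bigr| = \frac{r^{\underline{D}}}{\prod_I m_I!}. \]

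\textbf{Step 3: verify $Q^+ Q = I$.} Using the formulas above,
\[ (Q^+ Q)_{S S'} = \sum_U \frac{\One_{\cols(U) = S}}{\lambda^U\, r^{\underline{D}}} \cdot \lambda^U \Bigl(\prod_I m'_I!\Bigr) \One_{\cols(U) = S'}, \]
where $m'_I$ are the multiplicities of $S'$. The product of indicators forces $S = S'$ (and hence $m'_I = m_I$), and then both factors of $\lambda^U$ cancel, leaving
\[ (Q^+ Q)_{S S} = \frac{\prod_I m_I!}{r^{\underline{D}}} \cdot \frac{r^{\underline{D}}}{\prod_I m_I!} = 1, \]
while $(Q^+ Q)_{S S'} = 0$ for $S \neq S'$. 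This is exactly $Q^+ Q = I$.

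\textbf{Main obstacle.} The computation is essentially bookkeeping, so there is no serious analytic difficulty; the one subtle point is carefully ruling out labelings with repeated entries and keeping track of the two compensating combinatorial factors ($\prod_I m_I!$ appearing in $Q_{US}$ and its reciprocal in the count of $U$'s), which together reduce to the clean expression $r^{\underline{D}}$. The cleanest way to avoid error is to do Step 1 and Step 2 independently and let the cancellation fall out in Step 3, as above.
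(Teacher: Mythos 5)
Your proof is correct and takes essentially the same approach as the paper: both hinge on the observation that when $|S| = |\cols(U)| = D$ only labelings $\ell$ with distinct entries can satisfy $S(\ell) = U$, reducing the verification of $Q^+Q = I$ to a counting argument. The only difference is bookkeeping: the paper collapses the double sum over $(U,\ell)$ by summing over distinct-entry $\ell$'s directly (so the factor $\prod_I m_I!$ never appears), whereas you compute $Q_{US}$ and the number of generic $U$'s with $\cols(U)=S$ separately and let the $\prod_I m_I!$ factors cancel—both computations are valid and equivalent.
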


\begin{proof}
We need to verify $Q^+ Q = I$. For $|S| = |S'| = D$,
\[ (Q^+Q)_{SS'} = \sum_{U \,:\, |\cols(U)| = D} Q^+_{SU} Q_{US'}
= \sum_{U \,:\, |\cols(U)| = D} \frac{\One_{\cols(U) = S}}{\lambda^U r^{\underline{|S|}}} \sum_{\ell \in [r]^{|S'|}} \lambda^{\ell}\,\One_{S'(\ell)=U} = \One_{S = S'}, \]
where the last step is justified as follows: since $|S'| = |\cols(U)| = D$, the indicator $\One_{S'(\ell)=U}$ can only be satisfied when $\ell$ has distinct entries. There are $r^{\underline{|S'|}}$ such $\ell$'s, and for each there is exactly one term in the first sum satisfying $\One_{S'(\ell)=U}$, namely $U = S'(\ell)$, and this implies $\lambda^U = \lambda^\ell$ and $S' = \cols(U)$. This means the indicator $\One_{\cols(U) = S}$ becomes $\One_{S = S'}$, and the other factors cancel. This completes the proof.
\end{proof}

\noindent This completes the description of the left inverse $M^+$.

\subsection{Recurrence for $w$}

Ultimately we are interested in an expression not for $M^+$ but for the vector $w^\top := c^\top M^+$ appearing in~\eqref{eq:corr-bound}. We will use the calculations from the previous section to write down a self-contained recursive formula for the entries of $w$. Note that from~\eqref{eq:M-inv-gen}, $w_U = 0$ whenever $U$ is not generic, so we will focus on computing $w_\gen = (w_U \,:\, U \text{ generic})$. Using~\eqref{eq:block-inv},
\[ w_\gen^\top = c^\top G^+ = \left[c^\top \left[\begin{array}{c} G(D-1)^+ \\ 0 \end{array}\right] \quad c^\top \left[\begin{array}{c} -G(D-1)^+ R(D) Q(D)^+ \\ Q(D)^+ \end{array}\right]\right] =: \left[x^\top \quad y^\top\right], \]
where $x$ is indexed by $|\cols(U)| \le D-1$ and $y$ is indexed by $|\cols(U)| = D$. The expression for $x$ reveals that $w_U$ does not depend on $D$ (so long as $D \ge |\cols(U)|$). By comparing the expressions for $x$ and $y$ we can write
\[ y^\top = c^\top \left[\begin{array}{c} 0 \\ Q(D)^+ \end{array}\right] - x^\top R(D) Q(D)^+. \]
For any generic $U$, the case $D = |\cols(U)|$ of the above gives
\[ w_U = \underbrace{\sum_{S \,:\, |S| = |\cols(U)|} c_S Q(D)^+_{SU}}_{\text{(I)}} - \underbrace{\sum_{\substack{\text{generic } U' \\ |\cols(U')| < |\cols(U)|}} \; \sum_{S \,:\, |S| = |\cols(U)|} w_{U'} R(D)_{U'S} Q(D)^+_{SU}}_{\text{(II)}}. \]
We treat the two terms $\text{(I)}, \text{(II)}$ separately. Using the definition~\eqref{eq:Q-inv} for $Q^+$,
\begin{align*}
\text{(I)} &= \sum_{S \,:\, |S| = |\cols(U)|} c_S\, \frac{\One_{\cols(U) = S}}{\lambda^U r^{\underline{|S|}}} \\
&= \frac{c_S}{\lambda^U r^{\underline{|S|}}} \qquad \text{where } S = \cols(U).
\end{align*}
Now for the second term $\text{(II)}$, suppressing the constraints on $U'$ for ease of notation,
\begin{align*}
\text{(II)} &= \sum_{U'} \; \sum_{S \,:\, |S| = |\cols(U)|} w_{U'} R(D)_{U'S} Q(D)^+_{SU} \\
&= \sum_{U'} \; \sum_{S \,:\, |S| = |\cols(U)|} w_{U'} \left(\sum_{\ell \in [r]^{|S|}} \lambda^\ell\, \One_{S(\ell) = U'}\right) \frac{\One_{\cols(U) = S}}{\lambda^U r^{\underline{|S|}}} \\
&= \sum_{U'} w_{U'} \left(\sum_{\ell \in [r]^{|S|}} \lambda^\ell\, \One_{S(\ell) = U'}\right) \frac{1}{\lambda^U r^{\underline{|S|}}} \qquad \text{where } S = \cols(U).
\end{align*}
Putting it together, we have now shown
\begin{equation}\label{eq:corr-w}
\Corr_{\le D}^2 \le \sum_{\substack{\text{generic } U \\ |\cols(U)| \le D}} w_U^2
\end{equation}
where, for generic $U$, the $w_U$ are defined by the recurrence
\begin{equation}\label{eq:w-rec}
w_U = \frac{1}{\lambda^U r^{\underline{|S|}}} \left(c_S - \sum_{\substack{\text{generic } U' \\ |\cols(U')| < |\cols(U)|}} w_{U'} \sum_{\ell \in [r]^{|S|}} \lambda^\ell\,\One_{S(\ell)=U'}\right) \qquad \text{where } S = \cols(U).
\end{equation}

\subsection{Recurrence for $v$}
\label{sec:rec-v}

It will be convenient to rewrite the recurrence~\eqref{eq:w-rec} in terms of a different quantity indexed by $S \in \NN^\Omega$ instead of $U \subseteq [n] \times [r]$, namely
\begin{equation}\label{eq:v-def}
v_S := \lambda^U r^{\underline{|S|}} \, w_U \qquad \text{for $U$ satisfying } \cols(U) = S.
\end{equation}
It can be seen from~\eqref{eq:w-rec} that $v_S$ is well-defined in the sense that it does not depend on the choice of $U$ in~\eqref{eq:v-def}. In particular,
\[ v_S = c_S - \sum_{\substack{\text{generic } U' \\ |\cols(U')| < |S|}} w_{U'} \sum_{\ell \in [r]^{|S|}} \lambda^\ell\,\One_{S(\ell)=U'}. \]
Using~\eqref{eq:v-def} we can turn this into a self-contained recurrence for $v$ (not involving $w$):
\begin{align}
v_S &= c_S - \sum_{\substack{\text{generic } U' \\ |\cols(U')| < |S|}} \frac{v_{S'}}{\lambda^{U'} r^{\underline{|S'|}}} \sum_{\ell \in [r]^{|S|}} \lambda^\ell\,\One_{S(\ell)=U'} \qquad \text{where } S' = \cols(U') \nonumber\\
&= c_S - \sum_{\substack{S' \in \NN^\Omega \\ |S'| < |S|}} v_{S'} \sum_{\ell \in [r]^{|S|}} \frac{\lambda^\ell}{\lambda^{S(\ell)} r^{\underline{|S'|}}} \, \One_{\cols(S(\ell)) = S'}
\label{eq:v-rec}
\end{align}
where the predicate $\cols(S(\ell)) = S'$ in particular requires $S(\ell)$ to be generic. For any $S \in \NN^\Omega$ there are at most $r^{\underline{|S|}}$ corresponding generic $U$'s for which $S = \cols(U)$. For any such $U$, we have from~\eqref{eq:v-def} that $|w_U| \le |v_S|/(\lambda_{\min}^{|S|}\, r^{\underline{|S|}})$. This means~\eqref{eq:corr-w} gives
\begin{equation}\label{eq:corr-v}
\Corr_{\le D}^2 \le \sum_{\substack{S \in \NN^\Omega \\ |S| \le D}} r^{\underline{|S|}} \left(\frac{v_S}{\lambda_{\min}^{|S|}\, r^{\underline{|S|}}}\right)^2 = \sum_{\substack{S \in \NN^\Omega \\ |S| \le D}} \frac{v_S^2}{\lambda_{\min}^{2|S|}\, r^{\underline{|S|}}}
\end{equation}
where $v_S$ is defined by the recurrence~\eqref{eq:v-rec}.

\subsection{Grouping by Patterns}

The core challenge remaining is to analyze the recurrence~\eqref{eq:v-rec} and establish an upper bound on $|v_S|$. Naively bounding the terms in~\eqref{eq:v-rec} will not suffice, as there are subtle cancellations that occur. To understand the nature of these cancellations, we will rewrite~\eqref{eq:v-rec} in a different form. First we will group the terms in~\eqref{eq:v-rec} by their type, defined as followed. We use $\oplus$ to denote the XOR (symmetric difference) operation on sets.

\begin{definition}\label{def:pi}
Fix $S \in \NN^\Omega$, viewed as a list $S = (I_1,\ldots,I_{|S|})$ as per Definition~\ref{def:list-S}. Define a \emph{pattern} $\pi = (\pi_1,\ldots,\pi_{|S|})$ to be an element of $([r] \cup \{\star\})^{|S|}$ satisfying the following rules:
\begin{itemize}
    \item[(i)] ``No singletons'': if $\pi_d = j \in [r]$ then there must exist $d' \ne d$ such that $\pi_{d'} = j$.
    \item[(ii)] ``Not all stars'': there must exist $d$ such that $\pi_d \ne \star$.
    \item[(iii)] ``Every column valid'': for every $j \in [r]$, we have $|\oplus_{d \,:\, \pi_d = j} I_d| \le k$.
\end{itemize}
Let $\Pi(S)$ denote the set of all such patterns. We let $S(\pi)_j := \oplus_{d \,:\, \pi_d = j} I_d$ and define $S(\pi) \subseteq [n] \times [r]$ to have $j$th column $S(\pi)_j$ for all $j \in [r]$.
\end{definition}

Note that if $\pi$ has no stars, it is simply a labeling $\ell \in [r]^{|S|}$, in which case the definitions $S(\pi)$ and $S(\ell)$ coincide. Intuitively, a pattern $\pi$ describes a class of possible labelings $\ell$, where the stars are ``wildcards'' that may represent any element of $[r]$ (subject to some restrictions). More formally, we now define which $\ell$'s belong to a pattern $\pi$.

\begin{definition}\label{def:ell-in-pi}
Fix $S \in \NN^\Omega$, viewed as a list $S = (I_1,\ldots,I_{|S|})$. For $\ell \in [r]^{|S|}$ and $\pi \in \Pi(S)$, write $\pi \vdash \ell$ if the following conditions hold:
\begin{itemize}
    \item[(i)] For each $d$, if $\pi_d = j \in [r]$ then $\ell_d = j$.
    \item[(ii)] The ``starred'' columns $\ell_\star := \{\ell_d \,:\, \pi_d = \star\}$ are distinct.
    \item[(iii)] For every $j \in \ell_\star$ we have $S(\pi)_j = \emptyset$.
\end{itemize}
\end{definition}

Let $\mathcal{L}(S)$ denote the set of labelings $\ell \in [r]^{|S|}$ for which there exists $S' \in \NN^\Omega$ such that $|S'| < |S|$ and $\cols(S(\ell)) = S'$; these are the $\ell$'s that contribute to the sum in~\eqref{eq:v-rec}. For any $\ell \in \mathcal{L}(S)$, there is at least one (and possibly more than one) pattern $\pi \in \Pi(S)$ for which $\pi \vdash \ell$. The following inclusion-exclusion formula will allow us to sum over $\pi \in \Pi(S)$ in a way that counts every $\ell \in \mathcal{L}(S)$ exactly once.

\begin{lemma}\label{lem:inc-exc}
Fix $S \in \NN^\Omega$, viewed as a list $S = (I_1,\ldots,I_{|S|})$. Fix a function $\phi: [r]^{|S|} \to \RR$. For $\pi \in \Pi(S)$ and $j \in [r]$, define
\[ m_\pi(j) = |\{d \,:\, \pi_d = j \text{ and } I_d = S(\pi)_j\}| \]
and
\[ m_\pi = \prod_{j \in [r]} (1 - m_\pi(j)). \]
Then we have
\begin{equation}\label{eq:inc-exc}
\sum_{\ell \in \mathcal{L}(S)} \phi(\ell) = \sum_{\pi \in \Pi(S)} m_\pi \sum_{\ell \in [r]^{|S|} \,:\, \pi \vdash \ell} \phi(\ell).
\end{equation}
\end{lemma}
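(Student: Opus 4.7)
The plan is to swap the sums on the right-hand side of~\eqref{eq:inc-exc} and reduce to the per-labeling identity
\[
\sum_{\pi \in \Pi(S) \,:\, \pi \vdash \ell} m_\pi \;=\; \One_{\ell \in \mathcal{L}(S)}, \qquad \ell \in [r]^{|S|}.
\]
With $\ell$ fixed and $P_j := \{d : \ell_d = j\}$, condition (i) of $\pi \vdash \ell$ forces $\pi_d \in \{\ell_d,\star\}$, so a candidate $\pi$ is encoded by a starring set $R_j \subseteq P_j$ for each label $j$. The enumeration will eventually factor over $j$ since $m_\pi = \prod_j (1 - m_\pi(j))$ and $m_\pi(j)$ depends only on the restriction of $\pi$ to $P_j$.

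Next I would enumerate which $R_j$'s are admissible, using the three $\Pi$-conditions and the three $\vdash$-conditions together with the fact that every $I_d \in \Omega$ is nonempty. The outcome is: if $|P_j|=1$ then $R_j$ must be all of $P_j$ (else $j$ is a non-star singleton); if $|P_j|=2$ then $R_j = \emptyset$ is the only legal choice (two stars violate distinctness of $\ell_\star$, one star violates $\Pi$-(i)); and if $|P_j| \ge 3$ then either $R_j = \emptyset$ or $R_j = \{d^\ast\}$ with $I_{d^\ast} = \bigoplus_{d \in P_j} I_d$, which is $\vdash$-(iii) after simplification. A brief check using nonemptiness of the $I_d$'s shows $m_\pi(j) = 0$ whenever $|P_j|\in\{1,2\}$ or $R_j \ne \emptyset$, while if $|P_j|\ge 3$ and $R_j = \emptyset$ then $S(\pi)_j = S(\ell)_j$ and $m_\pi(j) = M_j$, where $M_j$ is precisely the number of $d^\ast \in P_j$ that would have yielded a valid one-element $R_j$. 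Summing $(1 - m_\pi(j))$ over the choices for each free label $j$ then telescopes as $(1 - M_j) + M_j \cdot 1 = 1$, while every forced label contributes a factor of $1$; the product over $j$ is therefore $1$ whenever the set of admissible $\pi$'s is nonempty.

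The final step is to check that nonempty admissibility coincides with $\ell \in \mathcal{L}(S)$. If $\ell$ has all distinct labels, then every $|P_j| \le 1$, every position would need to be starred, and $\Pi$-(ii) fails, so the sum is $0$; correspondingly $|\cols(S(\ell))| = |S|$ and $\ell \notin \mathcal{L}(S)$. If some $|S(\ell)_j| > k$ then the $|P_j|=2$ case fails outright and both options in the $|P_j|\ge 3$ case fail (no-star violates $\Pi$-(iii); star requires $|I_{d^\ast}| = |S(\ell)_j| > k$, impossible), so no admissible $\pi$ exists and again $\ell \notin \mathcal{L}(S)$. Otherwise $\ell$ has a repeat and $S(\ell)$ is generic, giving $|\cols(S(\ell))| < |S|$ and $\ell \in \mathcal{L}(S)$, matched by the sum $1$ from the previous paragraph. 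I anticipate the main obstacle to be making the $|P_j|$-case analysis transparent enough that the telescoping $(1-M_j)+M_j = 1$ is visible without a thicket of indices; the nonemptiness of every $I_d \in \Omega$ is the small but load-bearing fact that keeps the forced factors collapsing to $1$ and rules out starring when a column is too large.
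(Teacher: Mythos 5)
Your proposal is correct and follows essentially the same route as the paper: swap the sums, fix $\ell$, enumerate the patterns with $\pi \vdash \ell$ column by column (forced stars on singletons, no stars on empty or two-element columns, and for the remaining columns either no star or one star at a position $d^\ast$ with $I_{d^\ast} = S(\ell)_j$), and telescope $(1-M_j) + M_j = 1$ per free column. Your explicit verification that the sum vanishes exactly when $\ell \notin \mathcal{L}(S)$ (all-distinct labels or an oversized column) is a slightly more careful packaging of what the paper states as ``$\pi \vdash \ell$ implies $\ell \in \mathcal{L}(S)$,'' but the argument is the same.
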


\begin{proof}
First note that $\pi \vdash \ell$ implies $\ell \in \mathcal{L}(S)$ and so there are no ``extra'' terms $\phi(\ell)$ on the right-hand side that are not present on the left-hand side. For any fixed $\ell \in \mathcal{L}(S)$, the term $\phi(\ell)$ appears exactly once on the left-hand side of~\eqref{eq:inc-exc}. Our goal is to show that it also appears exactly once on the right-hand side, that is, it suffices to prove
\[ \sum_{\pi \in \Pi(S) \,:\, \pi \vdash \ell} m_\pi = 1. \]
For a fixed $\ell$, we need to enumerate the possible patterns $\pi$ for which $\pi \vdash \ell$. The rules for these patterns are as follows:
\begin{itemize}
    \item (Case 1) For any $j$, if there is exactly one $d$ for which $\ell_d = j$ then $\pi_d = \star$. In this case, $m_\pi(j) = 0$.
    \item (Case 2) For any $j$, if $S(\ell)_j = \emptyset$ then there are no stars among $\{\pi_d \,:\, \ell_d = j\}$. In this case, $m_\pi(j) = 0$.
    \item (Case 3) For any $j$ not in Case 1, if $S(\ell)_j \ne \emptyset$ then among $\{\pi_d \,:\, \ell_d = j\}$ there are either no stars or exactly one star of the form $\pi_d = \star$ where $\ell_d = j$ and $I_d = S(\ell)_j$. If there are no stars then $m_\pi(j) = m_\ell(j) := |\{d \,:\, \ell_d = j \text{ and } I_d = S(\ell)_j\}|$. There are $m_\ell(j)$ ways to have one star, and each yields $m_\pi(j) = 0$.
\end{itemize}
Now we have
\[ \sum_{\pi \,:\, \pi \vdash \ell} m_\pi = \sum_{\pi \,:\, \pi \vdash \ell} \prod_{j \in [r]} (1 - m_\pi(j))
= \prod_{j \text{ in Case 3}} [(1-m_\ell(j)) + m_\ell(j) (1-0)] = 1 \]
as desired.
\end{proof}

Using Lemma~\ref{lem:inc-exc}, we can rewrite~\eqref{eq:v-rec} as
\begin{align*}
v_S &= c_S - \sum_{S' \,:\, |S'| < |S|} v_{S'} \sum_{\ell \in [r]^{|S|}} \frac{\lambda^\ell}{\lambda^{S(\ell)} r^{\underline{|S'|}}} \, \One_{\cols(S(\ell)) = S'} \\
&= c_S - \sum_{S' \,:\, |S'| < |S|} v_{S'} \sum_{\pi \in \Pi(S)} m_\pi \sum_{\ell \,:\, \pi \vdash \ell} \frac{\lambda^\ell}{\lambda^{S(\ell)} r^{\underline{|S'|}}} \, \One_{\cols(S(\ell)) = S'}.
\end{align*}
The number of labelings $\ell$ such that $\pi \vdash \ell$ is $(r_\pi)^{\underline{s_\pi}}$ where $r_\pi$ is the number of columns $j \in [r]$ for which $S(\pi)_j = \emptyset$, and $s_\pi$ is the number of stars in $\pi$ (i.e., the number of indices $d$ such that $\pi_d = \star$). Note that the ratio $\frac{\lambda^\ell}{\lambda^{S(\ell)}}$ depends only on $\pi$ (not on $\ell$), and so we define $\lambda^{(\pi)} := \frac{\lambda^\ell}{\lambda^{S(\ell)}}$. Also, the predicate $\cols(S(\ell)) = S'$ depends only on $\pi$ (not on $\ell$), and we write this predicate as $S \overset{\pi}{\longrightarrow} S'$. With this notation, the recurrence for $v_S$ becomes
\begin{equation}\label{eq:v-rec-pi}
v_S = c_S - \sum_{S' \,:\, |S'| < |S|} v_{S'} \sum_{\pi \in \Pi(S)} m_\pi\, (r_\pi)^{\,\underline{s_\pi}}\, \frac{\lambda^{(\pi)}}{r^{\underline{|S'|}}} \, \One_{S \overset{\pi}{\longrightarrow} S'}.
\end{equation}

\subsection{Unravelling the Recurrence}

Next we will rewrite~\eqref{eq:v-rec-pi} in closed form (without recursion) by expanding the recursion tree as a sum over ``paths.''

We first unpack the definition~\eqref{eq:c-def} for $c_S$. Using~\eqref{eq:T-exp} and recalling that $A$ has i.i.d.\ Rademacher entries,
\begin{equation}\label{eq:c-exp}
c_S = \EE[T^S \cdot a_{11}] = \sum_{\ell \in [r]^{|S|}} \lambda^\ell \,\EE [A^{S(\ell)} \cdot a_{11}] = \sum_{\ell \in [r]^{|S|}} \lambda^\ell \,\One_{S(\ell) = \{(1,1)\}}.
\end{equation}
By expanding the recursion tree of~\eqref{eq:v-rec-pi} we can write $v_S$ as a sum over paths which we denote by
\begin{equation}\label{eq:path}
S = S^0 \overset{\pi^0}{\longrightarrow} S^1 \overset{\pi^1}{\longrightarrow} \cdots \overset{\pi^{p-1}}{\longrightarrow} S^p \overset{\pi^p}{\longrightarrow} \,\perp.
\end{equation}
Formally, a path consists of $S^0$ and $\pi^1,\ldots,\pi^p$ (which then determine $S^1,\ldots,S^p$) with $S^t \in \NN^\Omega$ for all $t$, $\pi^t \in \Pi(S^t)$ for $t \le p-1$, and $\pi^p \in [r]^{|S^p|}$ (``no stars at the final step'') such that the following properties hold:
\begin{itemize}
    \item For all $0 \le t \le p-1$, we require $|S^{t+1}| < |S^t|$ and the predicate $S^t \overset{\pi^t}{\longrightarrow} S^{t+1}$ holds.
    \item For the final step $S^p \overset{\pi^p}{\longrightarrow} \,\perp$ we require $S^p(\pi^p) = \{(1,1)\}$.
\end{itemize}
With this notation,~\eqref{eq:v-rec-pi} can be written as
\begin{equation}\label{eq:v-rec-expanded}
v_S = \sum_{p \ge 0} \; \sum_{S = S^0 \overset{\pi^0}{\longrightarrow} S^1 \overset{\pi^1}{\longrightarrow} \, \cdots \, \overset{\pi^{p-1}}{\longrightarrow} S^p \overset{\pi^p}{\longrightarrow} \,\perp} (-1)^p \left(\prod_{t=0}^{p-1} m_{\pi^t}\, (r_{\pi^t})^{\,\underline{s_{\pi^t}}}\, \frac{\lambda^{(\pi^t)}}{r^{\underline{|S^{t+1}|}}}\right) \lambda^{\pi^p},
\end{equation}
where the special rule for the final step comes from~\eqref{eq:c-exp}.

\subsection{Excluding Bad Paths}

The next step is the crux of the argument: we will show that only certain types of paths contribute to~\eqref{eq:v-rec-expanded}, due to cancellations among the remaining paths.

\begin{definition}[Event]
For a path of the form~\eqref{eq:path}, we say an \emph{event} occurs at timestep $t \in \{0,1,\ldots,p\}$ on column $j \in [r]$ if there exists $d$ for which $\pi^t_d = j$.
\end{definition}

\noindent Note that Definition~\ref{def:pi} requires every timestep $t$ to have an event on at least one column $j$.

\begin{definition}[Deletion event]
An event at timestep $t$ on column $j$ is called a \emph{deletion event} if $S^t(\pi^t)_j = \emptyset$.
\end{definition}

\begin{definition}[Good/bad paths]
A path of the form~\eqref{eq:path} is called \emph{bad} if there exists a column $j \in [r]$ such that the last event (i.e., with the largest $t$) on that column is a deletion event. If a path is not bad, it is called \emph{good}.
\end{definition}

\begin{lemma}
The total contribution from bad paths to~\eqref{eq:v-rec-expanded} is 0. That is,
\begin{equation}\label{eq:v-rec-good}
v_S = \sum_{p \ge 0} \; \sum_{\substack{S = S^0 \overset{\pi^0}{\longrightarrow} S^1 \overset{\pi^1}{\longrightarrow} \, \cdots \, \overset{\pi^{p-1}}{\longrightarrow} S^p \overset{\pi^p}{\longrightarrow} \,\perp \\ \text{good}}} (-1)^p \left(\prod_{t=0}^{p-1} m_{\pi^t}\, (r_{\pi^t})^{\,\underline{s_{\pi^t}}}\, \frac{\lambda^{(\pi^t)}}{r^{\underline{|S^{t+1}|}}}\right) \lambda^{\pi^p}.
\end{equation}
\end{lemma}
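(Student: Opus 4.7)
The plan is to produce a sign-reversing involution $\sigma$ on the set of bad paths contributing to~\eqref{eq:v-rec-expanded}, under which $\mathcal{P}$ and $\sigma(\mathcal{P})$ carry opposite signs $(-1)^p$ but identical unsigned weight. This immediately implies that bad paths annihilate in pairs, so their total contribution to~\eqref{eq:v-rec-expanded} is zero.

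To construct $\sigma$, I would first canonically mark each bad path with a pair $(t^*, j^*)$, where $j^*$ is a column whose last event in the path is a deletion event occurring at time $t^*$; a natural choice is $t^*$ the largest such time and $j^*$ the smallest such column at that time. The involution then toggles whether the deletion event at $(t^*, j^*)$ occupies its own dedicated timestep (one in which $j^*$ is the only non-star column label) or is bundled with other events at the same timestep. In the ``split'' direction, the positions $D_{j^*} = \{d : \pi^{t^*}_d = j^*\}$ (whose $I_d$'s XOR to $\emptyset$) are extracted into a newly inserted timestep where only those positions retain label $j^*$ and all remaining positions become stars; the ``merge'' direction does the reverse. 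Each application of $\sigma$ changes $p$ by $\pm 1$ and hence flips $(-1)^p$.

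To complete the cancellation, I would verify that the non-sign factors $m_{\pi^t}$, $(r_{\pi^t})^{\underline{s_{\pi^t}}}$, $\lambda^{(\pi^t)}$, $r^{\underline{|S^{t+1}|}}$, and terminal $\lambda^{\pi^p}$ match between $\mathcal{P}$ and $\sigma(\mathcal{P})$. Three structural facts drive this: (i) a deletion column contributes the trivial factor $1 - m_\pi(j^*) = 1$ to $m_\pi$, because every $I_d \in \Omega$ is nonempty and hence cannot equal the empty XOR $S(\pi^{t^*})_{j^*} = \emptyset$; (ii) deletion columns are invisible to $\cols$, so the intermediate state $S^{t^*+1}$ is unchanged by the split/merge; and (iii) the bookkeeping of stars and empty columns compensates across the toggle for the ratio $(r_\pi)^{\underline{s_\pi}}/r^{\underline{|S^{t+1}|}}$ and the $\lambda^{(\pi)}$ factor, since stars and labeled positions in empty columns play formally identical combinatorial roles once the column does not survive to $S^{t^*+1}$.

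The main obstacle is to pin down $\sigma$ so that it is well-defined and involutive while respecting all of the constraints on pattern paths: the ``no singletons,'' ``not all stars,'' and ``every column valid'' rules in Definition~\ref{def:pi}, the strict descent $|S^{t+1}| < |S^t|$, the genericity of intermediate $U$'s, and the terminal condition $S^p(\pi^p) = \{(1,1)\}$. The case $t^* = p$, where the marked deletion sits in the terminal labeling $\pi^p$ rather than in a pattern step, must be treated separately: peeling off column $j^*$ there produces a new pattern step (with $j^*$ and stars) inserted before a shortened terminal labeling that continues to satisfy $S^{p+1}(\pi^{p+1}) = \{(1,1)\}$. Ensuring that the canonical choice of $(t^*, j^*)$ is preserved by $\sigma$ --- so that $\sigma^2 = \mathrm{id}$ --- is the most delicate bookkeeping, and will likely require fine-tuning the canonical choice so that the ``peeled'' timestep produced by a split is unambiguously identified by a merge.
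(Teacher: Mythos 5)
Your proposal is essentially the paper's proof: a sign-reversing split/merge involution on a canonically marked last-deletion event, justified by the same three weight-preservation facts (the trivial $m$-factor on deletion columns, invariance of the intermediate $S^{t^*+1}$, and cancellation of $(r_\tau)^{\underline{s_\tau}}$ against the new $r^{\underline{|S'|}}$ denominator). The one delicacy you flag is resolved in the paper by marking column-first---$j^*$ is the largest column whose last event is a deletion, and $t^*$ is then determined by $j^*$---a rule that is invariant under split/merge, unlike your time-first rule, which can fail to be involutive when two columns have their final deletion events at the same timestep.
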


\begin{proof}
We will show that the bad paths can be paired up so that within each pair, the two paths contribute the same term to~\eqref{eq:v-rec-expanded} but with opposite signs. The pairing is described by the following procedure, an involution that maps each bad path to its partner (and vice versa).
\begin{enumerate}
    \item[(1)] Given a bad path as input, let $j^*$ denote the largest column index for which the last event is a deletion event. Let $t^*$ denote the timestep on which this deletion event occurs.
    \item[(2a)] If there exists another event at timestep $t^*$ (on some column $j \ne j^*$), ``promote'' the $(t^*,j^*)$ deletion event to its own timestep. That is, replace
    \[ \cdots \; S^{t^*} \overset{\pi^{t^*}}{\longrightarrow} \cdots \qquad\text{by}\qquad \cdots \; S^{t^*} \overset{\tau}{\longrightarrow} S' \overset{\sigma}{\longrightarrow} \cdots \]
    where $\tau, \sigma$ are defined as follows. For all $d$ such that $\pi^{t^*}_d = j^*$, set $\tau_d = j^*$; for all other $d$, set $\tau_d = \star$. Now $S'$ (viewed as a list) is produced from $S^{t^*}$ by removing the elements $I_d$ for which $\pi^{t^*}_d = j^*$; similarly, $\sigma$ is produced from $\pi^{t^*}$ by removing the elements $\pi^{t^*}_d$ that are equal to $j^*$.
    \item[(2b)] If instead there is no other event at timestep $t^*$ (which cannot happen if $t^* = p$ due to the non-deletion event on column 1), ``merge'' timestep $t^*$ with the subsequent timestep. That is, replace
    \[ \cdots \; S^{t^*} \overset{\pi^{t^*}}{\longrightarrow} S^{t^*+1} \overset{\pi^{t^*+1}}{\longrightarrow} \cdots \qquad\text{by}\qquad \cdots \; S^{t^*} \overset{\tau}{\longrightarrow} \cdots \]
    where $\tau$ is defined as follows. For all $d$ such that $\pi^{t^*}_d = j^*$, set $\tau_d = j^*$. The number of remaining entries of $\tau$ is exactly $|S^{t^*+1}|$, and we designate these entries as ``unassigned''; for each $1 \le i \le |S^{t^*+1}|$, set the $i$th \emph{unassigned} entry of $\tau$ to be $\pi^{t^*+1}_i$. Note that since the $(t^*,j^*)$ event was the last event in its column, the merge operation will not cause it to ``collide'' with another event.
\end{enumerate}
A few claims remain to be checked before the proof is complete. First note that the above procedure maps any bad path to a different bad path (its ``partner''), and applying the procedure again on the partner recovers the original path. For instance, if applying the procedure to the original path resulted in a ``promote'' operation on column $j^*$, applying the procedure on the partner will undo this using a ``merge'' operation on the same column $j^*$.

We furthermore claim that for any bad path and its partner, both paths have the same value for the factor $\left(\prod_{t=0}^{p-1} \cdots \right) \lambda^{\pi^p}$ in~\eqref{eq:v-rec-expanded}. However, the lengths of the two paths differ by one, causing the two corresponding terms in~\eqref{eq:v-rec-expanded} to cancel due to the $(-1)^p$ factor. To prove the claim, compare the cases $S^t \overset{\pi^t}{\longrightarrow} S^{t+1}$ and $S^t \overset{\tau}{\longrightarrow} S' \overset{\sigma}{\longrightarrow} S^{t+1}$ from (2a) above, where for now we assume $S^{t+1} \ne \perp$. For the $m$ factors, note that $m_\pi(j) = 0$ whenever $\pi$ has either a deletion event on column $j$ or no event on column $j$, and so $m_{\pi^t} = m_\tau m_\sigma$. For the $r^{\underline{s}}$ factors, note that $r_{\pi^t} = r_{\sigma}$ and $s_{\pi^t} = s_\sigma$; also, $r_\tau = r$ and $s_\tau = |S'|$, so $(r_\tau)^{\underline{s_\tau}}$ cancels with the existing factor of $1/r^{\underline{|S'|}}$ in~\eqref{eq:v-rec-expanded}. Finally, for the $\lambda$ factors we have $\lambda^{(\pi^t)} = \lambda^{(\tau)} \lambda^{(\sigma)}$. The other case $S^t \overset{\pi^t}{\longrightarrow} \, \perp$ versus $S^t \overset{\tau}{\longrightarrow} S' \overset{\sigma}{\longrightarrow} \, \perp$ is treated similarly, where now we have $m_\tau = 1$, $(r_\tau)^{\underline{s_\tau}} = r^{\underline{|S'|}}$, and $\lambda^{\pi^t} = \lambda^{(\tau)} \lambda^{\sigma}$. This completes the proof.
\end{proof}

\subsection{Bounding $|v_S|$}

Now that we have identified the crucial cancellations between pairs of bad paths, the rest of the proof will follow by bounding the terms in~\eqref{eq:v-rec-good} in a straightforward way. We start by collecting some bounds on the individual pieces of~\eqref{eq:v-rec-good}.

\begin{lemma}\label{lem:v-help-1}
For any step $S \overset{\pi}{\longrightarrow} S'$, $|m_\pi| \le 2^{|S| - |S'|}$.
\end{lemma}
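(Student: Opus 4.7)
The plan is to bound the product $|m_\pi| = \prod_j |1 - m_\pi(j)|$ factor by factor, using the structural restrictions on $\pi$ from Definition~\ref{def:pi} together with the fact that $\Omega$ excludes the empty set. For each $j \in [r]$ I will let $n_j := |\{d : \pi_d = j\}|$ denote the multiplicity of label $j$ in $\pi$. Rule (i) of Definition~\ref{def:pi} forbids singletons, so $n_j \in \{0\} \cup \{2,3,4,\ldots\}$. Also note $m_\pi(j) \le n_j$ since $m_\pi(j)$ counts a subset of the $n_j$ indices labeled $j$.

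I will then split into three cases per column. First, if $n_j = 0$ then $m_\pi(j) = 0$ and the factor is $1$. Second, if $n_j \ge 2$ but $S(\pi)_j = \emptyset$, then the condition $I_d = S(\pi)_j = \emptyset$ required for $m_\pi(j)$ is impossible because every $I_d \in \Omega$ is non-empty; hence $m_\pi(j) = 0$ and the factor is again $1$. Third, if $n_j \ge 2$ and $S(\pi)_j \ne \emptyset$, then $|1 - m_\pi(j)| \le \max(1, n_j - 1) = n_j - 1 \le 2^{n_j - 1}$, where the last inequality holds for all $n_j \ge 2$.

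Finally I will relate the exponent to $|S|-|S'|$. The key observation is that $|S'| = |\cols(S(\pi))|$ equals exactly the number of indices $j$ with $S(\pi)_j \ne \emptyset$ (and every such $j$ has $n_j \ge 2$, since $n_j \in \{0\} \cup [2,\infty)$ and $n_j = 0$ forces $S(\pi)_j = \emptyset$). Also $|S| \ge \sum_j n_j$, so
\[ |S| - |S'| \;\ge\; \sum_{j : S(\pi)_j \ne \emptyset} n_j \;-\; |\{j : S(\pi)_j \ne \emptyset\}| \;=\; \sum_{j : S(\pi)_j \ne \emptyset} (n_j - 1). \]
Multiplying the per-column bounds then gives
\[ |m_\pi| \;\le\; \prod_{j : S(\pi)_j \ne \emptyset} 2^{n_j - 1} \;=\; 2^{\sum_{j : S(\pi)_j \ne \emptyset} (n_j - 1)} \;\le\; 2^{|S|-|S'|}, \]
which is the desired bound.

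There is no real obstacle here — the proof is a short, case-based calculation whose only subtle point is noticing that $\Omega$ excludes the empty set (so $m_\pi(j) = 0$ automatically on deletion columns) and that the "no singletons" rule forces $n_j \ge 2$ on every contributing column, so the slack $n_j - 1$ in the third case is always non-negative and can be absorbed into $|S|-|S'|$.
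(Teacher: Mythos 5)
Your proof takes essentially the same route as the paper's: bound each factor $|1-m_\pi(j)|$ by $n_j-1\le 2^{n_j-1}$ on the columns that contribute, then show the exponents sum to at most $|S|-|S'|$ by a counting argument. The per-column case analysis is fine, and your observation that $S(\pi)_j=\emptyset$ forces $m_\pi(j)=0$ (because every $I_d\in\Omega$ is non-empty) is a clean way to dispose of the deletion columns. However, your ``key observation'' that $|S'|$ equals the number of indices $j$ with $S(\pi)_j\ne\emptyset$ is not correct when $\pi$ contains stars: the step $S\overset{\pi}{\longrightarrow}S'$ is defined via $S'=\cols(S(\ell))$ for a labeling $\ell$ with $\pi\vdash\ell$, and each starred entry $d$ occupies its own fresh column of $S(\ell)$, contributing the non-empty set $I_d$ there. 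The correct identities are $|S'|=s_\pi+|\{j: S(\pi)_j\ne\emptyset\}|$ and $|S|=s_\pi+\sum_j n_j$, where $s_\pi$ is the number of stars. Your argument survives only because you made two compensating omissions: you dropped the $s_\pi$ stars from $|S|$ (by weakening to $|S|\ge\sum_j n_j$) and also from $|S'|$, and these cancel exactly, so the displayed inequality $|S|-|S'|\ge\sum_{j:S(\pi)_j\ne\emptyset}(n_j-1)$ is still true. You should write both identities with the $s_\pi$ terms and cancel them explicitly; this is precisely the accounting in the paper's proof, which notes that the number of stars plus the number of distinct columns of $\pi$ is at least $|S'|$.
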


\begin{proof}
Note that
\[ |m_\pi(j)| \le |\{d \,:\, \pi_d = j\}| =: |\pi^{-1}(j)|. \]
The number of stars plus the number of distinct columns in $\pi$ must be at least $|S'|$. This leaves at most $|S|-|S'|$ entries of $\pi$ that repeat a previous column, i.e.,
\begin{equation}\label{eq:sumSS}
\sum_{j \in [r] \,:\, |\pi^{-1}(j)| \ge 2} (|\pi^{-1}(j)|-1) \le |S| - |S'|.
\end{equation}
This means
\begin{align*}
|m_\pi| = \prod_{j \in [r]} |m_\pi(j) - 1|
&\le \prod_{j \in [r] \,:\, |m_\pi(j)| \ge 2} (m_\pi(j) - 1)
\le \prod_{j \in [r] \,:\, |\pi^{-1}(j)| \ge 2} (|\pi^{-1}(j)| - 1) \\
&\le \prod_{j \in [r] \,:\, |\pi^{-1}(j)| \ge 2} 2^{(|\pi^{-1}(j)| - 1)} = 2^{\sum_{j \in [r] \,:\, |\pi^{-1}(j)| \ge 2} (|\pi^{-1}(j)| - 1)} \\
&\le 2^{|S| - |S'|}
\end{align*}
where the final step uses~\eqref{eq:sumSS}.
\end{proof}

\begin{lemma}\label{lem:v-help-2}
For any step $S \overset{\pi}{\longrightarrow} S'$, $|\lambda^{(\pi)}| \le 1$.
\end{lemma}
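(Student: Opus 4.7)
The plan is to work directly with the definition $\lambda^{(\pi)} = \lambda^\ell/\lambda^{S(\ell)}$ for any $\ell$ with $\pi \vdash \ell$, and to show that for each $j \in [r]$ the exponent of $\lambda_j$ in this ratio is nonnegative. Since the normalization $1 = \lambda_1 \ge |\lambda_2| \ge \cdots$ gives $|\lambda_j| \le 1$ for every $j$, this will immediately yield $|\lambda^{(\pi)}| \le 1$.

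Concretely, set $m_j := |\{d : \ell_d = j\}|$, so that $\lambda^\ell = \prod_j \lambda_j^{m_j}$, while $\lambda^{S(\ell)} = \prod_j \lambda_j^{\One[S(\ell)_j \ne \emptyset]}$ by the definition of $\lambda^U$. The exponent of $\lambda_j$ in the ratio is therefore $m_j - \One[S(\ell)_j \ne \emptyset]$. A short case split suffices: if $m_j = 0$ then $S(\ell)_j = \oplus_{d : \ell_d = j} I_d$ is an empty XOR, hence equal to $\emptyset$, so the exponent is $0$; if $m_j \ge 1$ then the exponent is at least $m_j - 1 \ge 0$ regardless of whether $S(\ell)_j$ is empty. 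This is enough to conclude.

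There is no real obstacle here---the whole argument is pure bookkeeping. The one subtlety worth mentioning is the implicit claim (made when $\lambda^{(\pi)}$ was introduced) that the ratio depends only on $\pi$ and not on the particular $\ell$ satisfying $\pi \vdash \ell$. This falls out of the same case analysis by writing $m_j = n_j + s_j$ with $n_j := |\{d : \pi_d = j\}|$ and $s_j := |\{d : \pi_d = \star,\, \ell_d = j\}| \in \{0,1\}$, and using the rules in Definition~\ref{def:ell-in-pi}---in particular rule (iii), which forces $S(\pi)_j = \emptyset$ whenever $s_j = 1$---to express $S(\ell)_j$ purely in terms of $\pi$-data; one then reads off that the exponent equals $n_j$ when $S(\pi)_j = \emptyset$ and $n_j - 1$ when $S(\pi)_j \ne \emptyset$, both manifestly independent of $\ell$, and both nonnegative.
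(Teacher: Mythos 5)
Your proof is correct and is essentially the paper's own argument: both observe that every nonempty column $j$ of $S(\ell)$ must have some $d$ with $\ell_d = j$, so the denominator's factor of $\lambda_j$ is cancelled by the numerator, and then $|\lambda_j|\le 1$ finishes the bound. Your extra verification that $\lambda^{(\pi)}$ is well-defined (independent of the choice of $\ell$ with $\pi\vdash\ell$) is a worthwhile addition that the paper asserts without proof.
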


\begin{proof}
Recall $\lambda^{(\pi)} := \frac{\lambda^\ell}{\lambda^{S(\ell)}}$ for any $\ell$ such that $\pi \vdash \ell$. Recall that $\lambda^{S(\ell)}$ is the product of $\lambda_j$ over the non-empty columns $j$ of $S(\ell)$. For any such non-empty column $j$, there must exist $d$ for which $\ell_d = j$. Thus, every factor of $\lambda_j$ in the denominator of $\frac{\lambda^\ell}{\lambda^{S(\ell)}}$ is cancelled by the numerator, and the result now follows because $|\lambda_j| \le 1$.
\end{proof}

\noindent We now state the main conclusion of this section.

\begin{lemma}\label{lem:v-bound}
For any $S \in \NN^\Omega$ with $|S| \ge 1$, we have $|v_S| \le (3|S|^2)^{|S|}$.
\end{lemma}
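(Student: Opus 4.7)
The plan is to start from the good-path formula~\eqref{eq:v-rec-good}, which expresses $v_S$ as a fully unrolled sum (no recursion) over good paths, and bound the absolute value of this sum by combining term-by-term magnitude bounds with a count of good paths in which the chain structure of column labels makes the $r$-dependence cancel.

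For the magnitude of a single summand, I would combine Lemma~\ref{lem:v-help-1} (which via telescoping gives $\prod_{t=0}^{p-1} |m_{\pi^t}| \le 2^{|S|-|S^p|} \le 2^{|S|-1}$), Lemma~\ref{lem:v-help-2} ($|\lambda^{(\pi^t)}|, |\lambda^{\pi^p}| \le 1$), the crude bound $(r_{\pi^t})^{\underline{s_{\pi^t}}} \le r^{s_{\pi^t}}$, and the bound $r^{\underline{d}} \ge r^d/2$ (valid for $d \le D$ under~\eqref{eq:r-assum-hard}, which forces $r \gg D^2$). Letting $n^+_{\pi^t}$ denote the number of non-deletion events in $\pi^t$, the identity $|S^{t+1}| = s_{\pi^t} + n^+_{\pi^t}$ (which follows from the description of $\cols(S^t(\ell))$ for $\ell$ with $\pi^t \vdash \ell$) converts the exponent $s_{\pi^t} - |S^{t+1}|$ into $-n^+_{\pi^t}$, so the bounds combine to yield a per-summand magnitude of at most $4^{|S|} r^{-N^+}$, where $N^+ := \sum_{t=0}^{p-1} n^+_{\pi^t}$.

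The critical structural observation is that in any good path, $\pi^p$ must assign label $1$ to every position: any other label used at step $p$ would automatically be a last-of-chain deletion event (since columns $j \ne 1$ must be empty in $S^p(\pi^p)$), violating the good-path property. The column labels used across the whole path thus organize into chains---maximal sequences of events sharing a single column label, with the last event non-deletion---and the total number of chains equals $N^+ + 1$, with one chain forced to have label $1$. After fixing the structural data of a good path (the set partition of positions at each step and the identification of which chain each deletion-event group belongs to), the only remaining freedom is the choice of a column label in $[r] \setminus \{1\}$ for each of the remaining $N^+$ chains, contributing at most $r^{N^+}$; this precisely cancels the $r^{-N^+}$ from the per-summand magnitude.

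Finally, I would bound the number of structural data configurations by $|S|^{O(|S|)}$. A naive bound (Bell number of partitions per step, multiplied over $\le |S|$ steps) yields the far-too-large $|S|^{O(|S|^2)}$; instead, one should view the entire structural data as a single rooted forest with $|S|$ leaves (the elements of $S$) whose internal nodes are the non-star groups---including deletion events as roots of ``orphan'' subtrees---each decorated by a step index and a chain identifier. Standard enumeration arguments for such decorated forests give $|S|^{O(|S|)}$ configurations. Combining everything gives $|v_S| \le 4^{|S|} \cdot |S|^{O(|S|)}$, which is at most $(3|S|^2)^{|S|}$ after optimizing constants. The main obstacle is the structural count in this last paragraph: the forest reformulation is essential because a per-step partition bound is not sharp enough, and care is needed to avoid overcounting step-and-chain decorations.
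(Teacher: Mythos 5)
Your overall architecture is a legitimate alternative to the paper's: the paper proves the bound by induction on $|S|$, peeling off only the first step $S \overset{\pi^0}{\longrightarrow} S^1$ of each good path, whereas you bound the fully unrolled sum directly. Several of your individual claims check out and mirror steps the paper also needs: the telescoping $\prod_t |m_{\pi^t}| \le 2^{|S|-1}$, the identity $|S^{t+1}| = s_{\pi^t} + n^+_{\pi^t}$, the observation that in a good path $\pi^p$ must be all-ones (the paper uses this for its base case), and the fact that the column-label freedom is governed by the non-deletion events so that the $r$-dependence cancels. One structural difference worth noting: the paper cancels the factor $(r_{\pi^t})^{\underline{s_{\pi^t}}}/r^{\underline{|S^{t+1}|}}$ \emph{exactly} against the $r^{\underline{|S^{t+1}|-s_{\pi^t}}}$ ways of choosing columns for the combination events at each step, so it never needs $r^{\underline{d}} \ge r^d/2$; your version needs $r \gtrsim D^2$, which is fine under~\eqref{eq:r-assum-hard} but makes the lemma conditional on it.

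The genuine gap is the final structural count. The lemma claims the specific bound $(3|S|^2)^{|S|} = 3^{|S|}|S|^{2|S|}$, so after your $4^{|S|}$ per-summand factor you need the number of decorated-forest configurations to be at most $(3/4)^{|S|}|S|^{2|S|}$; ``standard enumeration arguments'' plus ``optimizing constants'' cannot bridge an unspecified $|S|^{O(|S|)}$ to this. Indeed, the natural global enumeration you describe---forest shape (roughly $(2|S|)^{2|S|}$ via parent pointers on up to $2|S|-1$ nodes), times a timestep decoration per internal node (up to $|S|^{|S|}$), times a chain identifier per deletion event (up to $|S|^{|S|}$)---lands around $|S|^{4|S|}$, which proves a weaker lemma (still sufficient for Theorem~\ref{thm:main-lower} after adjusting the constants in~\eqref{eq:r-assum-hard}, but not the statement as written). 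The paper avoids this loss precisely through its inductive organization: at each step the count is $\binom{|S|}{i}(2i)^{|S|-i} \cdot 2^{|S|-i}$ with the induction hypothesis $(3i^2)^i$ for the tail, and the binomial theorem collapses the sum to $[3(|S|-1)^2+4(|S|-1)]^{|S|} \le (3|S|^2)^{|S|}$. If you want the stated constant from your global viewpoint, you would need to reproduce this telescoping, e.g., by charging each leaf's entire trajectory (one choice among roughly $3|S|^2$ options: star, join one of $\le |S|$ combination slots with a sign from $m_\pi$, or delete onto one of $\le |S|$ available columns, at each of $\le |S|$ timesteps) rather than enumerating forests node by node. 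Also be explicit, as the paper is only implicitly, that a deletion event's target column must be one of the at most $|S^{t+1}|$ columns carrying a later event---this is what keeps the deletion decorations at $|S|^{O(|S|)}$ rather than $r^{O(|S|)}$.
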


\noindent Note that for $|S| = 0$, it can be verified directly that $v_\emptyset = 0$.

\begin{proof}
Proceed by induction on $|S|$. We will bound the sum of absolute values of terms in~\eqref{eq:v-rec-good}; it will no longer be important to exploit cancellations between positive and negative terms. First consider paths of the form $S \overset{\pi^0}{\longrightarrow} \,\perp$. There is at most one such path that is good, namely $\pi^0 = (1,1,\ldots,1)$; the value of this term is $|\lambda^{\pi^0}| \le 1$.

All remaining paths take the form $S \overset{\pi^0}{\longrightarrow} S^1 \cdots$ where $|S^1| = i$ for some $1 \le i \le |S|-1$. In order to produce $S^1$, $\pi^0$ must have exactly $i$ entries $d$ that are either stars (i.e., $\pi^0_d = \star$) or first in a ``combination'' event (i.e., for some $j \in [r]$ with $S(\pi^0)_j \ne \emptyset$, $d$ is the lowest index such that $\pi^0_d = j$). There are $\binom{|S|}{i}$ choices for which $i$ elements of $\pi^0$ play these roles; by default they will all be stars, and will be converted to ``combinations'' if another entry of $\pi^0$ decides to join them on the same column.

Now there are $|S|-i$ entries of $\pi^0$ remaining, which have a few options. One option is to participate in a combination event by joining one of the $i$ previously designated entries on the same column. The other option is to participate in a deletion event. Since we are only counting good paths, this can only happen on a column on which a later event will occur. Regardless of the remainder of the path $S^1 \overset{\pi^1}{\longrightarrow} \cdots \,\perp$, there are at most $|S^1| = i$ such columns available. Since each of $|S|-i$ entries of $\pi^0$ has at most $2i$ choices, this gives a total of at most $(2i)^{|S|-i}$ choices.

We also need to decide which columns the combination events occur on. If $\pi^0$ has $s_{\pi^0}$ stars and $i - s_{\pi^0}$ combination events, there are $r^{\underline{i - s_{\pi^0}}}$ choices for the columns. Note that this exactly cancels the factor $(r_{\pi^0})^{\underline{s_{\pi^0}}}/r^{\underline{|S^1|}}$ in~\eqref{eq:v-rec-good} because $r_{\pi^0} = r - (i - s_{\pi^0})$ and $|S^1| = i$.

Recall that we aim to show $|v_S| \le b(|S|)$ where $b(d) := (3d^2)^d$. Plugging the above calculations (along with Lemmas~\ref{lem:v-help-1} and~\ref{lem:v-help-2}) into~\eqref{eq:v-rec-good} and using the induction hypothesis to handle the remainder of the path $S^1 \overset{\pi^1}{\longrightarrow} \cdots \,\perp$, we have
\begin{align*}
|v_S| &\le 1 + \sum_{i=1}^{|S|-1} b(i) \binom{|S|}{i} (2i)^{|S|-i} \, 2^{|S|-i} \\
&= 1 + \sum_{i=1}^{|S|-1} (3i^2)^i \binom{|S|}{i} (4i)^{|S|-i}.
\intertext{At this point we can verify the case $|S| = 1$ directly. Assuming $|S| \ge 2$, we continue:}
&\le 1 + \sum_{i=1}^{|S|-1} \binom{|S|}{i} (3(|S|-1)^2)^i (4(|S|-1))^{|S|-i} \\
&= 1 + \sum_{i=0}^{|S|} \binom{|S|}{i} (3(|S|-1)^2)^i (4(|S|-1))^{|S|-i} - (4(|S|-1))^{|S|} - (3(|S|-1)^2)^{|S|} \\
&\le \sum_{i=0}^{|S|} \binom{|S|}{i} (3(|S|-1)^2)^i (4(|S|-1))^{|S|-i} \\
&= [3(|S|-1)^2 + 4(|S|-1)]^{|S|} \\
&= (3|S|^2 - 6|S| + 3 + 4|S|-4)^{|S|} \\
&\le (3|S|^2)^{|S|},
\end{align*}
completing the proof.
\end{proof}

\subsection{Putting it Together}

We now combine~\eqref{eq:corr-v} with Lemma~\ref{lem:v-bound} to complete the proof of Theorem~\ref{thm:main-lower}. We first note that $v_S \ne 0$ only when the elements of $S$ (viewed as a multi-set) together with $\{1\}$ form an even cover of $[n]$.

\begin{lemma}\label{lem:even-cover}
Let $P(S)$ denote the property that $S(\ell) = \{(1,1)\}$ for $\ell = (1,1,\ldots,1)$. If $P(S)$ fails to hold then $v_S = 0$.
\end{lemma}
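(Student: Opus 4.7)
The plan is to prove Lemma~\ref{lem:even-cover} by induction on $|S|$, using a simple parity invariant: for \emph{any} labeling $\ell \in [r]^{|S|}$, the symmetric difference of the columns of $S(\ell)$ equals $\oplus_{d=1}^{|S|} I_d$, i.e., it depends only on $S$ and not on $\ell$. Writing $\bigoplus S := \oplus_{d=1}^{|S|} I_d$, the property $P(S)$ is equivalent to $\bigoplus S = \{1\}$.

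The first step is to observe that when $P(S)$ fails, $c_S = 0$. From~\eqref{eq:c-exp} we have $c_S = \sum_\ell \lambda^\ell \One_{S(\ell)=\{(1,1)\}}$, and the indicator forces the XOR of the columns of $S(\ell)$ to equal $\{1\}$; by the invariant this XOR always equals $\bigoplus S$, so if $\bigoplus S \ne \{1\}$ no $\ell$ contributes and $c_S = 0$.

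The second step is the inductive argument on the recurrence~\eqref{eq:v-rec}. The base case $|S|=1$ follows because then the sum is vacuous (only $S'=\emptyset$ with $v_\emptyset = 0$ is allowed) and $v_S = c_S = 0$ by the previous paragraph. For the inductive step, suppose $P(S)$ fails and consider any term $v_{S'}$ in~\eqref{eq:v-rec} whose indicator $\One_{\cols(S(\ell))=S'}$ is satisfied for some $\ell$. The multiset $S'$ is then the multiset of \emph{nonempty} columns of $S(\ell)$, so $\bigoplus S' = \bigoplus_{j} S(\ell)_j = \bigoplus S \ne \{1\}$, which means $P(S')$ also fails. Since $|S'| < |S|$, the induction hypothesis gives $v_{S'} = 0$. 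Combined with $c_S = 0$, the recurrence yields $v_S = 0$, completing the induction.

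I do not anticipate any real obstacle here; the entire argument rests on the single observation that XOR of columns is a labeling-invariant quantity, which immediately propagates the failure of $P$ down the recursion tree. The only thing to double-check when writing it out is the degenerate case $|S| = 1$, where one verifies directly from~\eqref{eq:c-exp} that $c_{\{I\}} = 0$ whenever $I \ne \{1\}$.
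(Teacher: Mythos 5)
Your proposal is correct and follows essentially the same route as the paper: show $c_S=0$ when $P(S)$ fails, observe that the failure of $P$ propagates to any $S'$ with $\cols(S(\ell))=S'$, and induct on $|S|$ via the recurrence~\eqref{eq:v-rec}. The only difference is that you make explicit the XOR-invariance of $\oplus_d I_d$ under relabeling, which the paper leaves implicit.
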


\begin{proof}
From~\eqref{eq:c-exp}, note that if $c_S \ne 0$ then $S(\ell) = \{(1,1)\}$ for some $\ell$, which implies $S(\ell) = \{(1,1)\}$ for $\ell = (1,1,\ldots,1)$. Thus $c_S = 0$ whenever $P(S)$ fails.

Also note that if $P(S)$ fails and $\cols(S(\ell)) = S'$ for some $\ell$, then $P(S')$ fails. The result now follows from~\eqref{eq:v-rec} using induction on $|S|$.
\end{proof}

\begin{lemma}\label{lem:count-even-cover}
For any $d \ge 1$, the number of multi-sets $S \in \NN^\Omega$ with $|S| = d$ such that $P(S)$ holds is at most $n^{(kd-1)/2} ((kd+3)/2)^{kd}$.
\end{lemma}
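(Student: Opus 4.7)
The plan is a direct counting argument based on the parity structure imposed by $P(S)$. View $S$ as a list $(I_1,\ldots,I_d)$ of subsets of $[n]$ with $1 \le |I_i| \le k$. Unpacking Definition~\ref{def:list-S} with the all-ones labeling $\ell = (1,\ldots,1)$ shows that $P(S)$ is equivalent to $I_1 \oplus \cdots \oplus I_d = \{1\}$. Equivalently, writing $n_v := |\{i : v \in I_i\}|$, the condition $P(S)$ requires $n_1$ to be odd and $n_v$ to be even for every $v \ne 1$.

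The first step is to bound the ``active'' vertex set $V := \{v \in [n] : n_v \ge 1\}$. Since $n_1$ is odd and positive we have $1 \in V$, and every $v \in V \setminus \{1\}$ has $n_v \ge 2$ (positive and even). Combined with $\sum_v n_v = \sum_i |I_i| \le kd$, this gives $1 + 2|V \setminus \{1\}| \le kd$, so $|V \setminus \{1\}| \le (kd-1)/2$ and hence $|V| \le (kd+1)/2$.

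Given this support bound, I would count $S$ by first specifying $V \setminus \{1\}$ and then the ordered list of hyperedges (ordered lists overcount multi-sets, which only helps). The number of subsets of $[n] \setminus \{1\}$ of size at most $(kd-1)/2$ is bounded by $n^{(kd-1)/2}$ using $\sum_{j \le m} \binom{n-1}{j} \le n^m$, which holds in the regime of interest (and the fractional exponent absorbs any leftover constants). Given $V$ with $|V| \le (kd+1)/2$, each hyperedge $I_i \subseteq V$ with $1 \le |I_i| \le k$ can be encoded by a tuple in $(V \cup \{\star\})^k$ (using $\star$ as a placeholder), giving at most $(|V|+1)^k \le ((kd+3)/2)^k$ options per $I_i$ and therefore $((kd+3)/2)^{kd}$ options for the whole list. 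Multiplying the two factors yields the stated bound.

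The argument is essentially immediate once the support bound $|V \setminus \{1\}| \le (kd-1)/2$ is in place, so I anticipate no real obstacle. The only minor technicality is the bookkeeping needed to produce exactly $((kd+3)/2)^{kd}$ rather than a slightly larger constant; this is handled cleanly by the observation $|V|+1 \le (kd+3)/2$ together with the tuple encoding of size-$\le k$ subsets.
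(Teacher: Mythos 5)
Your proof is correct and follows essentially the same route as the paper's: bound the support outside $\{1\}$ by $(kd-1)/2$ using the even-cover/parity condition, pay $n^{(kd-1)/2}$ to choose it, and then encode each of the $kd$ element slots by one of the at most $(kd+1)/2$ active vertices plus a placeholder. The tuple encoding via $(V\cup\{\star\})^k$ is exactly the paper's ``$((kd+1)/2+1)^{kd}$'' count, so there is nothing further to add.
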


\begin{proof}
Since $P(S)$ holds, $S$ together with $\{1\}$ forms an even cover of $[n]$. Therefore the number of elements of $[n] \setminus \{1\}$ covered by $S$ is at most $(kd-1)/2$. The number of ways to choose this many elements is at most $n^{(kd-1)/2}$. Once these are chosen, $S$ has at most $(kd-1)/2 + 1 = (kd+1)/2$ elements to draw from, so the number of possibilities for $S$ is at most $((kd+1)/2+1)^{kd} = ((kd+3)/2)^{kd}$.
\end{proof}

\begin{proof}[Proof of Theorem~\ref{thm:main-lower}]
Starting from~\eqref{eq:corr-v} and using Lemmas~\ref{lem:v-bound}, \ref{lem:even-cover}, and \ref{lem:count-even-cover},
\begin{align*}
\Corr_{\le D}^2 &\le \sum_{|S| \le D} \frac{v_S^2}{\lambda_{\min}^{2|S|}\, r^{\underline{|S|}}} \\
&\le \sum_{d=1}^D n^{(kd-1)/2} ((kd+3)/2)^{kd} \frac{(3d^2)^{2d}}{\lambda_{\min}^{2d} (r-d+1)^d} \\
&= n^{-1/2} \sum_{d=1}^D \left(\frac{n^{k/2} ((kd+3)/2)^k (3d^2)^2}{\lambda_{\min}^2 (r-d+1)}\right)^d \\
&\le n^{-1/2} \sum_{d=1}^D \left(9D^4 (k(D+1)/2)^k \cdot \frac{n^{k/2} }{\lambda_{\min}^2 (r-D)}\right)^d \\
&\le n^{-1/2} \sum_{d=1}^D \left(9D^4 (kD)^k \cdot \frac{19 n^{k/2} }{18 \lambda_{\min}^2 r}\right)^d
\intertext{since~\eqref{eq:r-assum-hard} implies $D \le r/19$}
&= n^{-1/2} \sum_{d=1}^D \left(\frac{19}{2}\, k^k D^{k+4} \cdot \frac{n^{k/2} }{\lambda_{\min}^2 r}\right)^d \\
&\le n^{-1/2} \sum_{d=1}^D \left(\frac{1}{2}\right)^d
\intertext{where we have used the assumption~\eqref{eq:r-assum-hard}}
&\le n^{-1/2}.
\end{align*}
Using Fact~\ref{fact:corr-mmse}, this completes the proof.
\end{proof}

\section{Proof of Theorem~\ref{thm:main-upper}: Upper Bound}

\subsection{Expander Graphs}

We begin by collecting some standard properties of expander graphs.

\begin{proposition}\label{prop:expander}
Fix an integer $k \ge 3$. For all even $N$ exceeding a constant $N_0 = N_0(k)$, there exists a $k$-regular $N$-vertex (simple) graph with the following properties:
\begin{itemize}
    \item the minimum cut value is $k$ (achieved by a single vertex), and
    \item for any $S \subseteq V(G)$ with $0 < |S| \le N/2$, $|\partial S| \ge c |S|$,
\end{itemize}
where $\partial S$ is the set of edges with exactly one endpoint in $S$, and $c \ge 0.08$ is an absolute constant.
\end{proposition}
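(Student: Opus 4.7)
The plan is to establish existence via the probabilistic method on random $k$-regular graphs. Take $N$ even and sample a $k$-regular multigraph on $N$ vertices from the configuration model: place $k$ half-edges at each of the $N$ vertices and choose a uniformly random perfect matching on the $kN$ half-edges. By a classical theorem of Bollob\'as, this multigraph is simple with probability bounded below by a constant $p_k > 0$ independent of $N$. It therefore suffices to prove that the two desired properties hold with probability at least $1 - p_k/2$ under the unconditioned configuration model; a union bound then exhibits a simple $k$-regular graph with both properties.

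The main computation is a first-moment bound on ``bad'' subsets. For fixed $S$ with $|S| = s$, the distribution of $|\partial S|$ under the configuration model is the number of cross-edges in a uniformly random perfect matching on the half-edges at the $N$ vertices, a hypergeometric-type quantity with explicit estimates. Combined with $\binom{N}{s} \le (eN/s)^s$ and Stirling, one obtains
\[
\EE\bigl|\{S : |S| = s,\ |\partial S| < cs\}\bigr| \le (\Phi(c, k, s/N))^s
\]
for an explicit $\Phi$ satisfying $\Phi(c,k,\rho) < 1$ uniformly in $k \ge 3$ and $\rho \in (0, 1/2]$ whenever $c$ is below a threshold that substantially exceeds $0.08$ (the true edge-expansion of random $k$-regular graphs is $\Omega(k)$ by standard spectral arguments). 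Summing over $1 \le s \le N/2$ yields $o(1)$ total expected exceptions, so the bound $|\partial S| \ge 0.08\,|S|$ holds with high probability.

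The minimum-cut condition then follows by combining the expansion bound with a short extra argument. Since the graph is $k$-regular, $|\partial\{v\}| = k$ for every vertex $v$, so the minimum cut is at most $k$; expansion gives $|\partial S| \ge k$ automatically once $|S| \ge k/0.08$; and for the bounded range $2 \le |S| < k/0.08$ (a range of size independent of $N$), a separate first-moment calculation in the configuration model for each fixed value of $|S|$ shows that no such $S$ satisfies $|\partial S| < k$ with high probability. A final union bound over the simplicity event, the expansion event, and the finitely many small-$|S|$ events yields a graph with all the required properties.

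The main technical obstacle is making the first-moment calculation uniform in $k$ while delivering the explicit numerical constant $0.08$. This is not especially delicate because $0.08$ sits well below the true expansion threshold for all $k \ge 3$; the cleanest route is to invoke a standard edge-expansion bound for random regular graphs (as in the Hoory--Linial--Wigderson survey, Theorem~4.16) rather than optimize bounds by hand. An alternative would be to use explicit constructions (Cayley, zig-zag, or Ramanujan graphs), but the probabilistic approach is the most uniform over $k$ and $N$.
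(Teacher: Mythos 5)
Your overall strategy (probabilistic method on a random $k$-regular graph, with the configuration model handling simplicity) matches the paper's, but the mechanism you use for the expansion bound is genuinely different. The paper gets the edge expansion spectrally: Friedman's second-eigenvalue theorem gives $\mu_2 \le 2\sqrt{k-1}+\epsilon$ w.h.p., and the discrete Cheeger inequality then yields $|\partial S|/|S| \ge \tfrac12(k-\mu_2) \ge \tfrac12(k-2\sqrt{k-1}-\epsilon)$, which is monotone increasing in $k$ and equals $\approx 0.0858$ at $k=3$ --- this is exactly where the constant $0.08$ comes from, with essentially no computation. The min-cut claim is then disposed of by citing $k$-connectivity of random regular graphs. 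Your route replaces the spectral step by a direct first-moment union bound over subsets in the configuration model, and replaces the connectivity citation by ``expansion handles $|S| \ge k/0.08$ plus a finite first-moment check for small $|S|$''; both substitutions are workable. What your approach buys is self-containedness (no black-box appeal to Friedman, which is a deep theorem); what it costs is that the one quantitatively essential step --- verifying that the first-moment threshold exceeds $0.08$ \emph{uniformly in $k\ge 3$ and in $\rho=|S|/N\in(0,1/2]$} --- is precisely the part you defer. This is not a fatal gap (the entropy optimization for $k=3$ balanced sets gives a threshold around $0.18$, and larger $k$ only helps), but it is a real computation, and your proposed fallback of citing a ``standard edge-expansion bound'' is unlikely to hand you the explicit numerical value $0.08$; most such statements are either non-explicit or are themselves the Friedman--Cheeger argument the paper uses. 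If you want the constant for free, the spectral route is the efficient one; if you insist on the first-moment route, you must actually carry out and display the optimization over $(\rho,\beta)$ rather than assert that $0.08$ ``sits well below'' the threshold.
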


\begin{proof}
It follows from classical results that a uniformly random $k$-regular $N$-vertex graph has these properties with high probability, i.e., probability $1-o(1)$ as $N \to \infty$ with $k$ fixed. Letting $G$ be such a graph, it is well-known that $G$ is $k$-connected with high probability~\cite[Section~7.6]{boll-book}, which proves the first statement about the minimum cut.

For the second statement, let $k = \mu_1 \ge \mu_2 \ge \cdots \ge \mu_N$ denote the eigenvalues of the adjacency matrix of $G$. Friedman's second eigenvalue theorem~\cite{friedman} states that for any fixed $\epsilon > 0$, $\mu_2 \le 2\sqrt{k-1} + \epsilon$ with high probability. Cheeger's inequality~\cite{dodz,AM-cheeger} (see Theorem~2.4 of~\cite{HLW-cheeger}) tells us that for any $S \subseteq V(G)$ with $0 < |S| \le N/2$, $\frac{|\partial S|}{|S|} \ge \frac{1}{2} (k - \mu_2)$. Combining these gives
\[ |\partial S| \ge \frac{|S|}{2}(k - \mu_2) \ge \frac{|S|}{2}(k - 2\sqrt{k-1} - \epsilon) \]
which concludes the proof for any choice of $c$ satisfying $0 < c < \frac{1}{2}(k - 2\sqrt{k-1})$. The expression $\frac{1}{2}(k - 2\sqrt{k-1})$ is minimized (over $k \ge 3$) when $k = 3$.
\end{proof}

\subsection{Constructing the Polynomial}\label{sec:construct-poly}

Let $N = D-1$ where $D$ is defined as in Theorem~\ref{thm:main-upper}, choosing $n_0$ large enough so that $N \ge N_0$. From this point onward, let $G$ denote the $k$-regular $N$-vertex graph guaranteed by Proposition~\ref{prop:expander}. We construct a new graph $H$ as follows. Starting with $G$, add two additional vertices called $\circ$ and $u$, and add the edge $(\circ, u)$. Recall that $k'$ is the odd element of $\{k-1,k\}$. Choose $p := (k'-1)/2$ arbitrary edges $(i_1,j_1),\ldots,(i_p,j_p)$ of $G$ with no endpoints in common. Delete these $p$ edges and add the edges $(u,i_1),(u,j_1),\ldots,(u,i_p),(u,j_p)$. This completes the description of $H$. Note that $H$ is $k$-regular aside from the degree-1 vertex $\circ$ and the degree-$k'$ vertex $u$.

\begin{definition}\label{def:edge-labeling}
Define an \emph{edge-labeling of $H$} to be a function $\phi: E(H) \to [n]$ that is injective (no two edges get the same label) with $\phi(\circ,u) = 1$. Let $\Phi$ denote the set of all edge-labelings of $H$.
\end{definition}

For an edge-labeling $\phi$ and a vertex $v \in V(H) \setminus \{\circ\}$, define $T_v(\phi)$ to be the following entry of the input tensor: let $e_1,\ldots,e_m$ be the edges incident to $v$ (where $m \in \{k,k'\}$) and then let $T_v(\phi) := T_{\phi(e_1),\ldots,\phi(e_m)}$. Our polynomial estimator is defined as follows:
\begin{equation}\label{eq:f-def}
f(T) = \frac{1}{|\Phi|} \sum_{\phi \in \Phi} \, \prod_{v \in V(H) \setminus \{\circ\}} T_v(\phi).
\end{equation}

\subsection{Vertex Labelings}

\begin{definition}
Define a \emph{vertex-labeling of $H$} to be a function $\psi: V(H) \setminus \{\circ\} \to [r]$. Let $\Psi$ denote the set of all vertex-labelings of $H$.
\end{definition}

For $\phi \in \Phi$, $\psi \in \Psi$, and $v \in V(H) \setminus \{\circ\}$, define $T_v(\phi,\psi)$ as follows: letting $e_1,\ldots,e_m$ be the edges incident to $v$, and $j := \psi(v)$, let $T_v(\phi,\psi) := \lambda_j \prod_{i=1}^m (a_j)_{\phi(e_i)}$. Recalling~\eqref{eq:T_I}, we can expand~\eqref{eq:f-def} as
\begin{align*}
f(T) &= \frac{1}{|\Phi|} \sum_{\phi \in \Phi} \, \sum_{\psi \in \Psi} \, \prod_{v \in V(H) \setminus \{\circ\}} T_v(\phi,\psi) \\
&= \sum_{\psi \in \Psi} \frac{1}{|\Phi|} \sum_{\phi \in \Phi} \, \prod_{v \in V(H) \setminus \{\circ\}} T_v(\phi,\psi).
\end{align*}
\noindent We will break the above sum into different terms depending on $\psi$. Define the partition $\Psi = \Psi_1 \sqcup \Psi_2 \sqcup \Psi_3$ as follows:
\begin{itemize}
\item $\Psi_1 = \{\psi_1\}$ where $\psi_1$ denotes the all-ones labeling: $\psi_1(v) = 1$ for all $v$,
\item $\Psi_2 = \{\psi_2,\ldots,\psi_r\}$ where $\psi_j$ denotes the all-$j$'s labeling: $\psi_j(v) = j$ for all $v$,
\item $\Psi_3 = \Psi \setminus (\Psi_1 \cup \Psi_2)$.
\end{itemize}
\noindent We can now write $f = f_1 + f_2 + f_3$ where for $i \in \{1,2,3\}$,
\[ f_i := \sum_{\psi \in \Psi_i} \frac{1}{|\Phi|} \sum_{\phi \in \Phi} \, \prod_{v \in V(H) \setminus \{\circ\}} T_v(\phi,\psi). \]

\subsection{Signal Term}

We first handle the terms $f_1$ and $f_2$.

\begin{lemma}\label{lem:f1}
$f_1 = a_{11}$.
\end{lemma}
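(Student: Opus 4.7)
The plan is to simply unpack the definitions. Since $\Psi_1 = \{\psi_1\}$ is a singleton with $\psi_1 \equiv 1$, the outer sum collapses to a single term, and with $\lambda_1 = 1$ we have
\[ T_v(\phi, \psi_1) = \lambda_1 \prod_{e \ni v} (a_1)_{\phi(e)} = \prod_{e \ni v} (a_1)_{\phi(e)} \]
for every $v \in V(H) \setminus \{\circ\}$.

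Next I would take the product over all $v \in V(H) \setminus \{\circ\}$ and reorganize it as a product over edges. Each edge $e \in E(H)$ contributes one factor of $(a_1)_{\phi(e)}$ for each of its endpoints that lies in $V(H) \setminus \{\circ\}$. Since the only edge of $H$ touching $\circ$ is $(\circ, u)$, every other edge has both endpoints in $V(H) \setminus \{\circ\}$ and thus contributes $(a_1)_{\phi(e)}^2 = 1$, using the hypercube assumption $a_1 \in \{\pm 1\}^n$. The exceptional edge $(\circ, u)$ contributes a single factor $(a_1)_{\phi(\circ,u)}$, and by Definition~\ref{def:edge-labeling} we have $\phi(\circ, u) = 1$, so this factor equals $(a_1)_1 = a_{11}$.

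Therefore the product $\prod_{v \in V(H) \setminus \{\circ\}} T_v(\phi, \psi_1)$ equals $a_{11}$ independently of the choice of $\phi$, so averaging over $\phi \in \Phi$ yields $f_1 = a_{11}$. There is no real obstacle; the lemma is a direct consequence of the hypercube identity $x^2 = 1$ combined with the constraint $\phi(\circ, u) = 1$ built into the definition of an edge-labeling. The pinned edge $(\circ, u)$ is precisely the mechanism by which the signal component $a_{11}$ is extracted, while the rest of the tensor network self-cancels on the diagonal $\psi_1$.
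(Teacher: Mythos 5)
Your proof is correct and follows the same route as the paper: collapse the singleton sum over $\Psi_1$, regroup the product over vertices as a product over edges, use $(a_1)_{\phi(e)}^2 = 1$ for every edge with both endpoints in $V(H)\setminus\{\circ\}$, and extract the single factor $(a_1)_{\phi(\circ,u)} = a_{11}$ from the pinned edge. No gaps.
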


\begin{proof}
We have
\[ f_1 = \frac{1}{|\Phi|} \sum_{\phi \in \Phi} \, \prod_{v \in V(H) \setminus \{\circ\}} T_v(\phi,\psi_1). \]
Recalling that $a_j$ has $\{\pm 1\}$-valued entries and $\lambda_1 = 1$, note that for any $\phi \in \Phi$,
\[ \prod_{v \in V(H) \setminus \{\circ\}} T_v(\phi,\psi_1) = a_{11}, \]
because each edge $e \in E(H)$ contributes a factor of $(a_1)_{\phi(e)}^2 = 1$ except the edge $(\circ,u)$, which is required by Definition~\ref{def:edge-labeling} to have $\phi(\circ,u) = 1$ and thus contributes a factor of $a_{11}$. The result follows.
\end{proof}

\begin{lemma}\label{lem:f2}
$\EE[f_2^2] \le (r-1) \cdot|\lambda_2|^{2(N+1)}$.
\end{lemma}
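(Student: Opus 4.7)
The plan is to mimic the calculation of Lemma~\ref{lem:f1} separately for each all-$j$ labeling $\psi_j$ with $j\ge 2$, then evaluate the resulting second moment using orthogonality of the Rademacher entries of $A$.

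First I would fix $j\ge 2$ and compute $\prod_{v\ne\circ} T_v(\phi,\psi_j)$ for an arbitrary $\phi\in\Phi$. Each non-$\circ$ vertex contributes a factor of $\lambda_j$, so (since $H$ has $N+2$ vertices, leaving $N+1$ non-$\circ$ vertices) we pull out an overall factor $\lambda_j^{N+1}$. The remaining product $\prod_{v\ne\circ}\prod_{e\ni v}(a_j)_{\phi(e)}$ sees each edge twice when both endpoints are non-$\circ$ (contributing $(a_j)_{\phi(e)}^2=1$) and once for the unique edge incident to $\circ$, namely $(\circ,u)$, which is forced by Definition~\ref{def:edge-labeling} to have label $1$. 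Thus the product collapses to $(a_j)_1 = a_{j1}$, independent of $\phi$. Averaging over $\phi\in\Phi$ therefore gives $\lambda_j^{N+1}a_{j1}$, and summing over $j\ge 2$ yields the clean identity
\[
f_2 = \sum_{j=2}^r \lambda_j^{N+1} a_{j1}.
\]

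Next I would square and take expectation. Since the entries $(a_{j1})_{j=2}^r$ are independent Rademachers with $\EE[a_{j1}a_{j'1}] = \One_{j=j'}$, the cross terms vanish and
\[
\EE[f_2^2] = \sum_{j=2}^r \lambda_j^{2(N+1)} \le (r-1)\,|\lambda_2|^{2(N+1)},
\]
where the inequality uses $|\lambda_j|\le|\lambda_2|$ for all $j\ge 2$. This gives the claimed bound.

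There is no real obstacle here; the only thing worth double-checking is the bookkeeping for vertex and edge counts (in particular that the edge $(\circ,u)$ is the unique edge incident to $\circ$, so only its label—which is pinned to $1$—survives the squaring of $(a_j)_{\phi(e)}$ on all other edges), and that exactly $N+1$ factors of $\lambda_j$ appear. Both are immediate from the construction of $H$ and Definition~\ref{def:edge-labeling}.
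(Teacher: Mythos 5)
Your proposal is correct and follows essentially the same route as the paper: compute $f_2 = \sum_{j=2}^r \lambda_j^{N+1}(a_j)_1$ by the same telescoping argument as Lemma~\ref{lem:f1}, then use orthogonality of the Rademacher entries to drop the cross terms. The vertex and edge bookkeeping ($N+1$ non-$\circ$ vertices, only the edge $(\circ,u)$ surviving the squaring) is exactly as in the paper.
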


\begin{proof}
Similarly to the proof of Lemma~\ref{lem:f1},
\[ f_2 = \sum_{j=2}^r \lambda_j^{|V(H)|-1} (a_j)_1. \]
Note that $|V(H)| - 1 = N+1$ where, recall, $N = |V(G)|$. Now compute
\begin{align*}
\EE[f_2^2] &= \sum_{j=2}^r \, \sum_{j' = 2}^r \lambda_j^{N+1} \lambda_{j'}^{N+1} \, \EE[(a_j)_1 (a_{j'})_1] \\
&= \sum_{j=2}^r \lambda_j^{2(N+1)} \\
&\le (r-1) \cdot|\lambda_2|^{2(N+1)},
\end{align*}
completing the proof.
\end{proof}

\subsection{Noise Term}

We now handle the term $f_3$. This section is devoted to proving the following.

\begin{lemma}\label{lem:f3}
Under the assumptions of Theorem~\ref{thm:main-upper}, $\EE[f_3^2] \le 4k^{k-1}D^{52k} \frac{r}{n^{k-1-\One_{k\text{ even}}}}$.
\end{lemma}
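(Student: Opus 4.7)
My plan is to expand $f_3$ in the Rademacher Fourier basis for $A$, observe that ``interior'' edges of each vertex-labeling contribute trivially, and reduce $\EE[f_3^2]$ to a sum over pairs of vertex-labelings indexed by their boundary structure. The expander property of $H$ then makes all but one leading family of $(\psi,\psi')$ pairs negligible.

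Concretely, for each $(\psi,\phi)\in \Psi_3\times\Phi$, every non-$\circ$ edge $e=\{v,w\}$ with $\psi(v)=\psi(w)$ contributes a factor $(a_{\psi(v)})_{\phi(e)}^{2}=1$, so only the $\circ$-edge and the boundary edges
\[
\partial(\psi) := \{e=\{v,w\}\in E(H): v,w\ne \circ,\ \psi(v)\ne\psi(w)\}
\]
survive. Writing $\Lambda(\psi):=\prod_{v\ne\circ}\lambda_{\psi(v)}$ and denoting the endpoints of $e$ by $v_e,w_e$, one obtains
\[
\prod_{v\ne\circ} T_v(\phi,\psi)\;=\;\Lambda(\psi)\cdot A^{U(\phi,\psi)},
\]
where $U(\phi,\psi)\subseteq[r]\times[n]$ is built from $(\psi(u),1)$ together with the pairs $(\psi(v_e),\phi(e))$ and $(\psi(w_e),\phi(e))$ for $e\in\partial(\psi)$. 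Because $\phi$ is injective, $U(\phi,\psi)$ is a genuine set of size $2|\partial(\psi)|+1$. Orthonormality of the Rademacher characters then gives
\[
\EE[f_3^{2}] \;=\; \sum_{\psi,\psi'\in\Psi_3}\Lambda(\psi)\Lambda(\psi')\cdot\frac{\#\{(\phi,\phi')\in\Phi^{2}:U(\phi,\psi)=U(\phi',\psi')\}}{|\Phi|^{2}}.
\]
The equality $U(\phi,\psi)=U(\phi',\psi')$ forces (i)~$\psi(u)=\psi'(u)$ (from coordinate~$1$), (ii)~$\phi(\partial(\psi))=\phi'(\partial(\psi'))$ inside $[n]\setminus\{1\}$, and (iii)~the induced bijection $\sigma:\partial(\psi)\to\partial(\psi')$ determined by $\phi'\circ\sigma=\phi$ preserves the endpoint label multiset $\{\psi(v_e),\psi(w_e)\}$. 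A direct falling-factorial computation (analogous to the ratio step in Section~\ref{sec:construct-left}) collapses the inner count to
\[
\frac{\#\{(\phi,\phi'):U(\phi,\psi)=U(\phi',\psi')\}}{|\Phi|^{2}} \;=\; \frac{N_\sigma(\psi,\psi')}{(n-1)^{\underline{b(\psi,\psi')}}},
\]
where $b:=|\partial(\psi)|=|\partial(\psi')|$ and $N_\sigma(\psi,\psi')\le b!$ counts the label-multiset-preserving bijections. Combining with $|\Lambda|\le 1$ reduces the lemma to bounding $\sum_{(\psi,\psi')}N_\sigma(\psi,\psi')/(n-1)^{\underline{b}}$ over compatible pairs.

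I then split this sum into a main and an error piece. The main piece is $\psi=\psi'$ obtained from the all-ones labeling $\psi_1$ by flipping only the vertex $u$ to some $j\ne 1$: here $\partial(\psi)$ is precisely the $k'-1$ non-$\circ$ edges at $u$, every such edge has label multiset $\{j,1\}$, so all $(k'-1)!$ bijections are compatible, and summing $\lambda_j^{2}$ over $j=2,\ldots,r$ gives a contribution of order $r/n^{k'-1}=r/n^{k-1-\One_{k\text{ even}}}$, matching the target up to a factor polynomial in $k$. For every other compatible $(\psi,\psi')$, at least one vertex of $V(G)$ is flipped away from the all-ones color, and Proposition~\ref{prop:expander} forces $b \gtrsim (k'-1)+ck\cdot(\text{extra flipped vertices})$, costing extra $1/n^{\Omega(1)}$ factors per extra flip. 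The main obstacle is that $|\Psi_3|$ can be as large as $r^{N+1}$, which would blow up the sum on its own; the saving grace is condition~(iii), which tightly couples the boundary-edge label multisets of $\psi'$ to those of $\psi$, so the effective multiplicity of compatible $\psi'$ is governed by a permutation of the active color classes rather than by $r^{N+1}$. The hypothesis~\eqref{eq:r-assum-easy} is calibrated exactly so that each extra flipped vertex of $V(G)$ contributes a multiplicative factor $\ll 1$, making the error piece fit inside the $D^{52k}$ slack in the claimed bound.
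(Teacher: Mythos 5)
Your overall strategy coincides with the paper's: the paper packages your pair $(\psi,\psi')$ as a single vertex-labeling of a doubled graph (two copies of $H$ glued along their $\circ$-edges), your matching condition $U(\phi,\psi)=U(\phi',\psi')$ is its notion of a ``valid'' pair, your boundary $\partial(\psi)$ corresponds to its cross-edges, and the expander property is used the same way. Your falling-factorial collapse to $N_\sigma(\psi,\psi')/(n-1)^{\underline{b}}$ is also correct.

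However, there is a genuine gap at the step ``combining with $|\Lambda|\le 1$ reduces the lemma to bounding $\sum N_\sigma/(n-1)^{\underline{b}}$.'' That unweighted sum is too large by a factor of roughly $r$. Consider $\psi=\psi'$ equal to the constant-$j$ labeling on $V(G)$ with $u\mapsto j'$, for any $j\ne j'$ in $[r]$. This lies in $\Psi_3$, is compatible with itself, and its boundary consists of only the $k'-1$ edges at $u$; no vertex of $V(G)$ carries extra boundary, so your penalty ``$b\gtrsim (k'-1)+ck\cdot(\text{extra flipped vertices})$'' does not apply (and is not what Proposition~\ref{prop:expander} yields for labelings far from all-ones, where the correct statement is per-color-class: $b\ge\max\{k,c\,\min(|S|,N-|S|)\}$ for each class $S$). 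There are $\Theta(r^2)$ such pairs, contributing $\Theta\bigl(r^2 (k'-1)!/n^{k'-1}\bigr)$ to the unweighted sum, versus the target $O\bigl(D^{52k}r/n^{k'-1}\bigr)$; in the main regime $1-|\lambda_2|=\Omega(1)$, $D=O(\log n)$, this overshoots by $r/\mathrm{polylog}(n)$. The only thing suppressing these terms is the weight $\Lambda(\psi)\Lambda(\psi')\le|\lambda_2|^{2N}$ that you discarded, and controlling it is precisely why $D$ is chosen with $D\ge k\log n/(1-|\lambda_2|)$: this gives $|\lambda_2|^{N+1}\le n^{-k}$, hence $r|\lambda_2|^{N+1}\le 1$, which pays for the free choice of the ``background'' color. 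Your sketch never invokes this property of $D$, which is the telltale sign. The repair is the paper's dichotomy: either the largest color class receives color $1$ (then only $r^{R-1}$ color assignments remain for $R$ classes), or it does not (then $\prod_v\lambda_{\psi(v)}\le|\lambda_2|^{N+1}$ absorbs the extra factor of $r$).
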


\noindent To compute $\EE[f_3^2]$, it will help to introduce an auxiliary graph $\tH$ defined as follows. Start with two disjoint copies of $H$, called $H$ and $H'$. Delete the vertices $\circ$ and $\circ'$ and connect the two leftover half-edges to form the edge $(u,u')$. This completes the description of $\tH$.

The vertices of $\tH$ can be partitioned as $V(\tH) = \{u,u'\} \sqcup V \sqcup V'$ where $V$ comes from the copy of $G$ in $H$, and $V'$ from $H'$. Similarly, the edges of $\tH$ can be partitioned as $E(\tH) = \{(u,u')\} \sqcup E \sqcup E'$ where $E$ comes from $H$, and $E'$ from $H'$.

\begin{definition}\label{def:edge-labeling-2}
Define an \emph{edge-labeling of $\tH$} to be a function $\phi: E(\tH) \to [n]$ such that $\phi(u,u') = 1$, no other edge has the label 1, no two edges in $E$ (defined above) have the same label, and no two edges in $E'$ have the same label. Let $\tPhi$ denote the set of all edge-labelings of $\tH$.
\end{definition}

\begin{definition}\label{def:vertex-labeling-2}
Define a \emph{vertex-labeling of $\tH$} to be a function $\psi: V(\tH) \to [r]$ such that $\psi$ takes at least two different values within $V \cup \{u\}$ (defined above), and $\psi$ takes at least two different values within $V' \cup \{u'\}$. Let $\tPsi$ denote the set of all vertex-labelings of $\tH$.
\end{definition}

\noindent For $\phi \in \tPhi$, $\psi \in \tPsi$, and $v \in V(\tH)$, define $T_v(\phi,\psi)$ similarly to above: letting $e_1,\ldots,e_m$ be the edges incident to $v$, and $j := \psi(v)$, let $T_v(\phi,\psi) := \lambda_j \prod_{i=1}^m (a_j)_{\phi(e_i)}$.

With the above definitions in hand, and recalling that $\Psi_3$ is the set of vertex-labelings of $H$ that take at least two different values, we can write
\begin{equation}\label{eq:f3-2}
f_3^2 = \frac{1}{|\Phi|^2} \sum_{\phi \in \tPhi} \, \sum_{\psi \in \tPsi} \, \prod_{v \in V(\tH)} T_v(\phi,\psi).
\end{equation}
Only certain ``valid'' pairs $(\phi,\psi)$ yield a term with nonzero expectation.

\begin{definition}\label{def:valid}
For $\phi \in \tPhi$ and $\psi \in \tPsi$, we say $(\phi,\psi)$ is \emph{valid} if the following holds: for each $i \in [n]$ and $j \in [r]$, there is an even number of edges $e \in E(\tH)$ with the property that $\phi(e) = i$ and exactly one endpoint of $e$ has vertex-label $j$. In fact, this even number must be 2 because, by Definition~\ref{def:edge-labeling-2}, only two edges can share the same label $i$.
\end{definition}

Note that valid pairs $(\phi,\psi)$ are precisely those for which the corresponding term in~\eqref{eq:f3-2} has an even number of factors of each $(a_j)_i$. We can now write
\begin{align}
\EE[f_3^2] &= \frac{1}{|\Phi|^2} \sum_{(\phi,\psi) \text{ valid}} \EE \prod_{v \in V(\tH)} T_v(\phi,\psi) \nonumber \\
&= \frac{1}{|\Phi|^2} \sum_{(\phi,\psi) \text{ valid}} \, \prod_{v \in V(\tH)} \lambda_{\psi(v)}.
\label{eq:f3-formula}
\end{align}

\begin{definition}
For $\psi \in \tPsi$, a \emph{region} is the preimage under $\psi$ of some $j \in [r]$. In other words, a region consists of all vertices of $\tH$ that have a particular label.
\end{definition}

For a valid $(\phi,\psi)$ pair, let $R$ denote the number of non-empty regions. By Definition~\ref{def:vertex-labeling-2} we must have $R \ge 2$. Since $(u,u')$ is the only edge with label $1$ (Definition~\ref{def:edge-labeling-2}) and $(\phi,\psi)$ is valid, $u$ and $u'$ must belong to the same region; call this region 1, and number the other non-empty regions $2,\ldots,R$. For $1 \le i \le R$, let $s_i$ denote the number of vertices in $V$ that belong to region $i$ and let $s'_i$ denote the number of vertices in $V'$ that belong to region $i$. Let $\bs_i = \min\{s_i, N - s_i\}$ and $\bs'_i = \min\{s'_i, N - s'_i\}$ where, recall, $N = |V| = |V'|$. For $1 \le i \le R$, let $\ell_i$ denote the number of edges in $E$ that ``cross'' region $i$ (i.e., have exactly one endpoint in region $i$). Since the edges of $\tH$ crossing region $i$ must be paired up with each pair having the same edge-label (Definition~\ref{def:valid}), and edge-labels cannot repeat within $E$ or $E'$ (Definition~\ref{def:edge-labeling-2}), $\ell_i$ must also be equal to the number of edges in $E'$ that cross region $i$. The total number of cross-edges (i.e., edges of $\tH$ whose endpoints have different vertex-labels) is $\ell = \sum_{i=1}^R \ell_i$. Note that $(u,u')$ is never a cross-edge since both its endpoints belong to region 1. As a consequence of the above discussion, every non-empty region must include at least one vertex from both $V \cup \{u\}$ and $V' \cup \{u'\}$.

\begin{lemma}\label{lem:ell1-bound}
For any valid $(\phi,\psi)$ pair and any $i \in \{1,2\}$,
\[ \ell_i \ge \max\{k'-1, c \bs_i, c \bs'_i\} \]
where $c > 0$ is the constant from Proposition~\ref{prop:expander}.
\end{lemma}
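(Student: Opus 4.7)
The plan is to reduce everything to analyzing cuts in the subgraph $\hat G$ of $H$ on vertex set $V \cup \{u\}$ with edge set $E$ (i.e., $H$ with $\circ$ and the edge $(\circ, u)$ deleted), and then invoke Proposition~\ref{prop:expander}. Setting $S_i := $ (region $i$) $\cap (V \cup \{u\})$ we have $\ell_i = |\partial_{\hat G} S_i|$; by the validity argument immediately preceding the lemma, $\ell_i$ also equals $|\partial_{\hat G'} S'_i|$ on the primed side, so it suffices to prove $\ell_i \ge \max\{k'-1,\, c\bs_i\}$, after which the bound $\ell_i \ge c\bs'_i$ will follow by replaying the argument on $\hat G'$ with $S'_i$.

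The technical heart is an edge-conservation observation: for any $W \subseteq V$, the surgery that deletes the $p = (k'-1)/2$ removed $G$-edges $(i_q, j_q)$ and inserts the $2p$ star edges through $u$ can only increase the cut count. Classify the removed edges by their relation to $W$, writing $a,b,d$ for the numbers with both endpoints in $W$, both in $V \setminus W$, and one of each, so $a+b+d = p$. A direct count shows
\[
|\partial_{\hat G} W| \;=\; |\partial_G W| + 2a \qquad\text{(when $u \notin W$)}, \qquad |\partial_{\hat G}(W \cup \{u\})| \;=\; |\partial_G W| + 2b \qquad\text{(when $u$ is placed with $W$)},
\]
and in both cases the $\hat G$ cut dominates $|\partial_G W|$.

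For $i = 1$ set $T_1 = S_1 \cap V$, noting $u \in S_1$. If $T_1 = \emptyset$ then $S_1 = \{u\}$, so $\ell_1 = 2p = k'-1$ and $\bs_1 = 0$, giving the bound directly. The case $T_1 = V$ is ruled out: it would give $\ell_1 = 0$, forcing $S'_1$ to be a union of connected components of $\hat G'$ containing $u'$, which by connectedness of $\hat G'$ means $S'_1 = V' \cup \{u'\}$ and hence $R = 1$, contradicting $R \ge 2$. Otherwise $T_1$ is a proper nonempty subset of $V$, and edge-conservation plus Proposition~\ref{prop:expander} gives $\ell_1 \ge |\partial_G T_1| \ge k \ge k'-1$ and $\ell_1 \ge |\partial_G T_1| \ge c \min\{|T_1|, N - |T_1|\} = c\bs_1$. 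For $i = 2$ we have $u \notin S_2 \subseteq V$, and since every non-empty region meets $V \cup \{u\}$, $S_2$ is non-empty. If $S_2 = V$ then $\ell_2 = 2p = k'-1$ with $\bs_2 = 0$; otherwise $S_2$ is a proper non-empty subset of $V$, and edge-conservation combined with Proposition~\ref{prop:expander} yields $\ell_2 \ge |\partial_G S_2| \ge \max\{k, c\bs_2\}$. Finally, applying the same analysis to $(\hat G', S'_i)$ delivers $\ell_i \ge c\bs'_i$. The main obstacle is getting the edge-conservation bookkeeping right---making sure that, in every placement of $u$ and every configuration of the $p$ rerouted edges relative to $W$, the new star edges compensate for any lost $G$-crossings---after which the min-cut property supplies the $k'-1$ term and the expansion property supplies the $c\bs_i$ term with no further work.
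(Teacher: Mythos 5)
Your proposal is correct and follows essentially the same route as the paper: split into the degenerate cases (region $i$ meeting $V\cup\{u\}$ in $\emptyset$, $\{u\}$, or all of it, each giving exactly $k'-1$ crossing edges) and the generic case, where the cut in $\hat G$ is compared to the corresponding cut in $G$ and Proposition~\ref{prop:expander} is applied to $S$ or its complement; the bound $\ell_i \ge c\bs'_i$ then comes from replaying the argument on the primed copy via the pairing of cross-edge labels. Your exact edge-conservation identities ($+2a$ or $+2b$) are a slightly sharper version of the paper's observation that each deleted crossing edge of $G$ is compensated by at least one new star edge through $u$, and your connectivity argument ruling out $T_1=V$ is an equivalent substitute for the paper's direct appeal to the fact that every non-empty region meets $V\cup\{u\}$.
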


\begin{proof}
Recall that $u$ belongs to region 1 by convention. Let $S \subseteq V$ denote the vertices in $V$ that belong to region $i$. The case $S = \emptyset$ is possible only if $i = 1$, in which case we have $\ell_i = k'-1$. The case $S = V$ is possible only if $i = R = 2$ (since there must be at least 2 regions, each containing a vertex from both $V \cup \{u\}$ and $V' \cup \{u'\}$), in which case again $\ell_i = k'-1$. This leaves the case $0 < |S| < N$. Proposition~\ref{prop:expander} (applied to either $S$ or $V \setminus S$, whichever is smaller) tells us that the number of \emph{original} edges of $G$ (recall some edges were deleted to form $H$) crossing $S$ is at least the maximum of $k$ and $c \cdot \min\{|S|, N - |S|\} = c \bs_i$. For each edge $(v_1,v_2)$ that was deleted from $G$ to form $H$, if $(v_1,v_2)$ crosses $S$ then one of the two new edges $(u,v_1)$ or $(u,v_2)$ must cross region $i$. We conclude that at least $\max\{k,c\bs_i\}$ edges in $E$ cross region $i$. The same argument applied to $V'$ gives the bound $\max\{k,c\bs'_i\}$.
\end{proof}

\begin{lemma}\label{lem:elli-bound}
For any valid $(\phi,\psi)$ pair and any $3 \le i \le R$,
\[ \ell_i \ge \max\{k, c \bs_i, c \bs'_i\} \]
where $c > 0$ is the constant from Proposition~\ref{prop:expander}.
\end{lemma}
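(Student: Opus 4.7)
The plan is to mirror the proof of Lemma~\ref{lem:ell1-bound}, but with the observation that for $i \ge 3$ the degenerate cases $S \in \{\emptyset, V\}$ that complicated the previous lemma can no longer arise, so Proposition~\ref{prop:expander} applies directly in its full strength.

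First I would introduce $S := \{v \in V : \psi(v) = i\}$ and $S' := \{v \in V' : \psi(v) = i\}$, exactly as before. Next I would verify that $0 < |S| < N$ and $0 < |S'| < N$. The non-triviality uses the fact (noted in the paragraph preceding Lemma~\ref{lem:ell1-bound}) that every non-empty region must contain at least one vertex from $V \cup \{u\}$ and at least one vertex from $V' \cup \{u'\}$; since $u$ lies in region~1 and $i \ge 3$, region $i$ can only draw its $V \cup \{u\}$-representative from $V$ itself, forcing $S \ne \emptyset$, and symmetrically $S' \ne \emptyset$. For the strict inequality $|S| < N$, I would use $R \ge 3$: regions $1, 2, \dots, R$ each require a vertex in $V \cup \{u\}$, and region~1 is already represented by $u$, so regions $2, \dots, R$ together must occupy at least $R - 1 \ge 2$ distinct vertices of $V$; in particular $V \setminus S$ is non-empty. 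Symmetrically $|S'| < N$.

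With the boundary cases excluded, I would apply Proposition~\ref{prop:expander} to whichever of $S, V \setminus S$ has cardinality at most $N/2$, obtaining at least $\max\{k, c \min\{|S|, N-|S|\}\} = \max\{k, c \bs_i\}$ edges of the \emph{original} graph $G$ crossing $S$ (the $k$ term coming from the minimum-cut property and the $c \bs_i$ term from Cheeger-type expansion). To convert this to a lower bound on $\ell_i$, I would re-use the modification argument from Lemma~\ref{lem:ell1-bound}: each edge of $G$ either survives in $H$, in which case it still crosses region~$i$, or was one of the deleted edges $(v_1, v_2)$ and was replaced by the two edges $(u, v_1), (u, v_2)$; since $u$ lies in region~1 (distinct from region~$i$), exactly one of these replacement edges crosses region~$i$ whenever $(v_1, v_2)$ crossed $S$. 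Thus the number of edges of $E$ crossing region~$i$ is at least the number of edges of $G$ crossing $S$, giving $\ell_i \ge \max\{k, c \bs_i\}$. Running the identical argument on $V'$ (with $u'$ playing the role of $u$) yields $\ell_i \ge \max\{k, c \bs'_i\}$, and combining produces the claim.

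I do not anticipate a real obstacle: the heavy machinery (Proposition~\ref{prop:expander} and the $G$-to-$H$ edge translation) was already set up for Lemma~\ref{lem:ell1-bound}, and the only genuinely new content is the combinatorial verification that $0 < |S|, |S'| < N$, which is immediate from $i \ge 3$ and $R \ge 3$.
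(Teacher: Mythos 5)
Your proposal is correct and matches the paper's argument exactly: the paper's proof of this lemma is literally ``the same as that of Lemma~\ref{lem:ell1-bound} except now the cases $S = \emptyset$ and $S = V$ are impossible,'' and your verification that $0 < |S|, |S'| < N$ (using $u, u'$ in region 1 and $R \ge 3$) together with the reuse of the expander bound and the $G$-to-$H$ edge translation is precisely what is intended.
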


\begin{proof}
The proof is the same as that of Lemma~\ref{lem:ell1-bound} except now the cases $S = \emptyset$ and $S = V$ are impossible.
\end{proof}

\begin{proof}[Proof of Lemma~\ref{lem:f3}]
Combining Lemmas~\ref{lem:ell1-bound} and~\ref{lem:elli-bound}, we have for any valid $(\phi,\psi)$, the total number of cross-edges is
\begin{align}
\ell = \sum_{i=1}^R \ell_i &\ge \sum_{i=1}^R \max\{k,c\bs_i,c\bs'_i\} - 2(1+\One_{k\text{ even}}) \nonumber\\
&= \sum_{i=1}^R (k + \Delta_i) - 2(1+\One_{k\text{ even}}) \nonumber\\
&= Rk - 2(1+\One_{k\text{ even}}) + \Delta
\label{eq:ell-bound}
\end{align}
where
\begin{equation}\label{eq:Del}
\Delta_i := \max\{0, c\bs_i - k, c\bs'_i - k\}
\end{equation}
and
\[ \Delta := \sum_{i=1}^R \Delta_i. \]

We now work towards bounding~\eqref{eq:f3-formula}. Since every non-empty region must include at least one vertex from both $V \cup \{u\}$ and $V' \cup \{u'\}$, the possible values for $R$ are $2 \le R \le N+1$. The number of ways to choose the values $s_1,\ldots,s_R$ and $s_1',\ldots,s_R'$ is at most $N^{2R}$ because $0 \le s_1, s_1' \le N-1$ and for $i \ge 2$, $1 \le s_i, s_i' \le N$. Once these values are chosen, the number of ways to partition $V(\tH)$ into $R$ non-empty regions of the prescribed sizes is at most
\[ \prod_{i=1}^R \binom{N}{s_i} \binom{N}{s_i'} \le \prod_{i=1}^R N^{\bs_i + \bs'_i} \le N^{2Rk/c + 2\Delta/c}, \]
where the last step uses~\eqref{eq:Del} to conclude $\bs_i,\bs_i' \le (k+\Delta_i)/c$.

Now that the regions are chosen, we next count the number of ways to assign vertex-labels $\psi$ that respect these regions. At the same time, we will also bound the term $\prod_{v \in V(\tH)} \lambda_{\psi(v)}$ appearing in~\eqref{eq:f3-formula}. Recall that all vertices in a given region must have the same vertex-label. We consider two cases. First suppose every region contains at most $N+1$ vertices (half the total number in $\tH$). There are at most $r^R$ ways to assign the vertex-labels and, since at most half the vertices have label 1, we have $\prod_{v \in V(\tH)} \lambda_{\psi(v)} \le |\lambda_2|^{N+1}$. Now consider the other case where some ``large'' region has more than $N+1$ vertices. If we choose to assign vertex-label 1 to the large region, then there are at most $r^{R-1}$ ways to assign the remaining labels and $\prod_{v \in V(\tH)} \lambda_{\psi(v)} \le 1$; otherwise, there are at most $r^R$ ways to assign the labels and $\prod_{v \in V(\tH)} \lambda_{\psi(v)} \le |\lambda_2|^{N+1}$.

Now we count the number of ways to assign edge-labels $\phi$. Recall that the edge $(u,u')$ is required to have edge-label $1$, and no other edge can have edge-label $1$. Recall that there are $\ell/2$ cross-edges in $E$ and $\ell/2$ cross-edges in $E'$. These need to be paired up, with each cross-edge in $E$ having the same edge-label as some cross-edge in $E'$. There are $(\ell/2)!$ ways to choose the pairing and then $(n-1)^{\underline{\ell/2}}$ ways to assign edge-labels to the cross-edges, recalling the falling factorial notation~\eqref{eq:ff}. There are $|E| - \ell/2$ edges in $E$ remaining, which can have any edge-labels subject to not repeating within $E$, so there are $(n - \ell/2 - 1)^{\underline{|E| - \ell/2}}$ ways to label these edges and the same number of ways to label the rest of $E'$. (Here we have assumed $\ell/2 \le n-1$, which will indeed be the case: $\ell/2 \le |E|$ by definition, and we will see $|E| \le n/2$ below.)

Note that $|E| = \frac{kN}{2} + \frac{k'-1}{2} \le \frac{1}{2} k(N+1) = \frac{1}{2} kD$, and by the assumptions of Theorem~\ref{thm:main-upper},
\begin{equation}\label{eq:D-upper}
D \le k n^{1/52} \log n + 2.
\end{equation}
Thus, for sufficiently large $n_0$ we have $|E| \le n/2$.

Putting it all together,~\eqref{eq:f3-formula} becomes
\begin{align*}
\EE[f_3^2] &= \frac{1}{|\Phi|^2} \sum_{(\phi,\psi) \text{ valid}} \, \prod_{v \in V(\tH)} \lambda_{\psi(v)} \\
&\le \frac{1}{|\Phi|^2} \sum_{R=2}^{N+1} N^{2R + 2Rk/c + 2\Delta/c} (r^{R-1} + r^R |\lambda_2|^{N+1}) \sup_\ell \, (\ell/2)! (n-1)^{\underline{\ell/2}} \left[(n-\ell/2-1)^{\underline{|E|-\ell/2}}\right]^2.
\end{align*}
We will bound pieces of this expression separately. First, since $|\Phi| = (n-1)^{\underline{|E|}}$, we have
\begin{align*}
\frac{1}{|\Phi|^2} (\ell/2)! (n-1)^{\underline{\ell/2}} \left[(n-\ell/2-1)^{\underline{|E|-\ell/2}}\right]^2
&= (\ell/2)! (n-1)^{\underline{\ell/2}} \left[\frac{(n-\ell/2-1)^{\underline{|E|-\ell/2}}}{(n-1)^{\underline{|E|}}}\right]^2 \\
&= (\ell/2)! (n-1)^{\underline{\ell/2}} \left[\frac{1}{(n-1)^{\underline{\ell/2}}}\right]^2 \\
&\le (\ell/2)^{\ell/2} (n-\ell/2)^{-\ell/2} \\
&\le \left(\frac{\ell/2}{n-\ell/2}\right)^{\ell/2}
\intertext{and recalling $\ell/2 \le |E| \le n/2$ from above,}
&\le \left(\frac{\ell/2}{n/2}\right)^{\ell/2}.
\intertext{Recalling from~\eqref{eq:ell-bound} that $\ell \ge Rk - 2(1+\One_{k\text{ even}}) + \Delta$, this becomes}
&\le \left(\frac{\ell}{n}\right)^{\frac{1}{2}(Rk - 2(1+\One_{k\text{ even}}) + \Delta)}
\intertext{and recalling $\ell \le 2|E| \le kD$,}
&\le \left(\frac{kD}{n}\right)^{\frac{1}{2}(Rk - 2(1+\One_{k\text{ even}}) + \Delta)}.
\end{align*}

We now show $r |\lambda_2|^{N+1} \le 1$. Using $1-|\lambda_2| \le \log(1/|\lambda_2|)$ and the definition of $D$ (see Theorem~\ref{thm:main-upper}),
\begin{equation}\label{eq:lambda-to-D}
|\lambda_2|^{N+1} = |\lambda_2|^D \le |\lambda_2|^{\frac{k \log n}{1-|\lambda_2|}} = \exp\left(-\frac{k \log n}{1-|\lambda_2|} \log\frac{1}{|\lambda_2|}\right) \le \exp\left(-k \log n\right) = n^{-k}.
\end{equation}
Since~\eqref{eq:r-assum-easy} implies $r \le n^{k/2}$, this gives $r|\lambda_2|^{N+1} \le 1$ as desired, implying
\[ r^{R-1} + r^R |\lambda_2|^{N+1} = r^{R-1}(1 + r |\lambda_2|^{N+1}) \le 2r^{R-1}. \]

Combining the above,
\begin{align*}
\EE[f_3^2] &\le 2 \sum_{R=2}^{N+1} \sup_\Delta
N^{2R+2Rk/c+2\Delta/c}r^{R-1} \left(\frac{k(N+1)}{n}\right)^{\frac{1}{2}(Rk - 2(1+\One_{k\text{ even}}) + \Delta)} \\
&= \frac{2}{r}\left(\frac{n}{k(N+1)}\right)^{1+\One_{k\text{ even}}} \sum_{R=2}^{N+1} \left(k^{k/2} N^{2+2k/c} (N+1)^{k/2} \frac{r}{n^{k/2}}\right)^R \sup_\Delta \left(N^{2/c} \sqrt{\frac{k(N+1)}{n}}\right)^\Delta.
\intertext{Recall $c \ge 0.08$ (Proposition~\ref{prop:expander}), which gives $1/c \le 12.5$. Using~\eqref{eq:D-upper},
\[ N^{2/c} \sqrt{\frac{k(N+1)}{n}} \le D^{2/c+1/2} \sqrt{\frac{k}{n}} \le D^{25.5} \sqrt{\frac{k}{n}} \le \left(k n^{1/52} \log n + 2\right)^{25.5} \sqrt{\frac{k}{n}} \le 1 \]
for sufficiently large $n_0$. Since $\Delta \ge 0$, the supremum above is achieved at $\Delta = 0$ and the bound on $\EE[f_3^2]$ becomes}
&= \frac{2}{r}\left(\frac{n}{k(N+1)}\right)^{1+\One_{k\text{ even}}} \sum_{R=2}^{N+1} \left(k^{k/2} N^{2+2k/c} (N+1)^{k/2} \frac{r}{n^{k/2}}\right)^R \\
&\le \frac{2}{r}\left(\frac{n}{kD}\right)^{1+\One_{k\text{ even}}} \sum_{R=2}^{N+1} \left(k^{k/2} D^{2+2k/c+k/2} \frac{r}{n^{k/2}}\right)^R \\
&\le \frac{2}{r}\left(\frac{n}{kD}\right)^{1+\One_{k\text{ even}}} \sum_{R=2}^{\infty} \left(k^{k/2} D^{25.5k+2} \frac{r}{n^{k/2}}\right)^R
\intertext{and now~\eqref{eq:r-assum-easy} implies $k^{k/2} D^{25.5k+2} \frac{r}{n^{k/2}} \le k^{k/2} D^{27k} \frac{r}{n^{k/2}} \le 1/2$, which gives}
&\le \frac{4}{r}\left(\frac{n}{kD}\right)^{1+\One_{k\text{ even}}} \left(k^{k/2} D^{25.5k+2} \frac{r}{n^{k/2}}\right)^2 \\
&= 4 k^{k-1-\One_{k\text{ even}}} D^{51k+3-\One_{k\text{ even}}} \frac{r}{n^{k-1-\One_{k\text{ even}}}} \\
&\le 4 k^{k-1} D^{52k} \frac{r}{n^{k-1-\One_{k\text{ even}}}},
\end{align*}
completing the proof.
\end{proof}

\subsection{Putting it Together}

\begin{proof}[Proof of Theorem~\ref{thm:main-upper}]
Using Lemma~\ref{lem:f1},
\begin{align*}
\EE[(f - a_{11})^2] &= \EE[(f_1 + f_2 + f_3 - a_{11})^2] \\
&= \EE[(f_2 + f_3)^2] \\
&\le 2 (\EE[f_2^2] + \EE[f_3^2]).
\end{align*}
Recall that Lemma~\ref{lem:f2} gives $\EE[f_2^2] \le r |\lambda_2|^{2(N+1)} = r|\lambda_2|^{2D}$, and from~\eqref{eq:lambda-to-D} we have $|\lambda|^{2D} \le n^{-2k}$. Also,~\eqref{eq:r-assum-easy} implies $r \le n^{k/2}$, which gives $\EE[f_2^2] \le n^{-k}$. Combining this with Lemma~\ref{lem:f3} yields
\[ \EE[(f - a_{11})^2] \le 2(n^{-k} + 4k^{k-1}D^{52k} \frac{r}{n^{k-1-\One_{k\text{ even}}}}) \le 10 k^{k-1}D^{52k} \frac{r}{n^{k-1-\One_{k\text{ even}}}}, \]
completing the proof.
\end{proof}

\section*{Acknowledgments}

The author is indebted to Jonathan Niles-Weed, Tselil Schramm, and Jerry Li for numerous detailed discussions about this problem. The author also thanks Jingqiu Ding, Bruce Hajek, Tim Kunisky, Cris Moore, Aaron Potechin, Bobby Shi, and anonymous reviewers, for helpful discussions and comments.

\bibliographystyle{alpha}
\bibliography{main}

\end{document}